\documentclass[11pt]{article}
\usepackage{mathrsfs}
\usepackage[centertags]{amsmath}
\usepackage{amsfonts}
\usepackage{amssymb}
\usepackage{appendix} 
\usepackage{amsthm}
\usepackage{cases}
\usepackage{comment}
\usepackage{booktabs}
\usepackage{epsfig}
\usepackage {graphicx}
\usepackage{color}
\usepackage{CJK}
\usepackage{hyperref}
\usepackage{xr-hyper}
\usepackage{setspace}
\usepackage[ruled,vlined,linesnumbered]{algorithm2e}
\SetKwInput{KwIn}{Input}
\SetKwInput{KwOut}{Output}
\SetKwRepeat{Repeat}{repeat}{until}
\usepackage{subcaption}
\usepackage{natbib}
\usepackage{rotating} 
\usepackage{lscape}   
\usepackage{siunitx}              
\usepackage{geometry}             
\usepackage{multirow}
\usepackage{array}
\usepackage{tikz}
\usepackage{bm}
\usepackage{makecell}
\usepackage{soul}

\newcommand{\tabincell}[2]{\begin{tabular}{@{}#1@{}}#2\end{tabular}}

\usepackage{anysize}

\marginsize{1.1in}{1.1in}{0.55in}{0.8in} %

\newtheorem{theorem}{Theorem}[section]

\newtheorem{lemma}{Lemma}[section]

\newtheorem{assumption}{Assumption}[section]

\newtheorem{definition}{Definition}[section]
\numberwithin{equation}{section}

\def\bgamma{\boldsymbol{\gamma}}
\def\bbeta{\boldsymbol{\beta}}

\def\beps{\boldsymbol{\epsilon}}

\def\bz{\mathbf{z}}

\def\bx{\mathbf{x}}

\def\Cov{\operatorname{Cov}}
\def\Var{\operatorname{Var}}
\def\R{\mathbb{R}}

\def\bD{\mathbf{D}}

\def\Pr{\mathbb{P}}

\def\bW{\mathbf{W}}

\def\bP{\mathbf{P}}
\def\Span{\operatorname{Span}}

\begin{document}
\begin{CJK}{UTF8}{gbsn}


\title{\bf How Does LLM Help Regional CPI Forecast: An LLM-powered Deep Panel Modeling Framework}
\author{Tianchen Gao$^1$$^{\#}$, Ao Sun$^2$$^{\#}$ and Yurou Wang$^3$$^{\#}$, Jingyuan Liu$^4$$^*$ and Cheng Hsiao$^5$}
\date{}
\maketitle

\begin{center}
	\thanks{
		$^1$ Beijing International Center for Mathematical Research (BICMR), Peking University.\\
		
		$^2$ Data Sciences and Operations Department, Marshall School of Business, University of Southern California.\\
		
		$^3$ Paula and Gregory Chow Institute for Studies in Economics, Xiamen University.\\
		
		$^4$ Department of Statistics and Data Science, School of Economics, Xiamen University.\\
		
		$^5$ Department of Economics, University of Southern California.\\
		$^{\#}$These authors are equally contributed co-first authors, listed in alphabetical order. \\
		$^*$: corresponding author. E-mail: jingyuan@xmu.edu.cn
	}
\end{center}

\begin{abstract}
Understanding regional Consumer Price Index (CPI) dynamics is essential for timely and effective economic policymaking. However, traditional modeling procedures typically rely only on parametric panel modeling with low-frequency and high-cost macroeconomic indicators, which often fail to capture rapid market fluctuations and lead to inaccurate predictions. To this end, we propose a residual-joint-modeling framework that integrates large language model (LLM) analyses and social media narratives via a new deep neural network based panel modeling. Specifically, we construct a large narrative corpus from a newly collected {\it Sina Weibo} dataset, and develop a prompt-based GPT model and a series of fine-tuned BERT models to generate high-frequency LLM-induced surrogates for regional CPI. A novel joint modeling strategy is then advocated to transfer the information from these surrogates to the target regional CPI data and hence empower CPI prediction. To solve the joint objectives, we further introduce a new deep panel learning procedure with region-wise homogeneity pursuit, which has its own significance in panel data analysis literature. In addition, conformal-based panel prediction intervals are provided to quantify the uncertainty of the LLM-powered prediction. The proposed approach significantly reduces short-term forecasting errors and more effectively captures abrupt inflationary shifts compared to traditional econometric models. While demonstrated for regional CPI forecasting, the proposed framework is broadly applicable for incorporating insights from LLMs to enhance traditional statistical modeling.
\end{abstract}

\noindent\textbf{Keywords:} Large language model; Regional CPI; Panel data; Economic narrative analysis
\newpage

\section{Introduction}

Forecasting regional Consumer Price Index (CPI) is a vital yet challenging statistical and economic task. It plays an increasingly important role in policy design and economic decision-making. Local governments rely on regional CPI forecasts when setting ``CPI control targets'' and allocating fiscal transfers, price subsidies, and social-safety-net programs \citep{ciccarelli2010global,nakamura2014fiscal,cavallo2017inflation, cloyne2020monetary}. Firms similarly depend on local inflation forecasts to guide pricing, inventory management, and procurement decisions \citep{bils2004some,gorodnichenko2017price}. Moreover, regional CPI forecasting provides an empirical foundation for studying price co-movement, cost pass-through, and market integration across heterogeneous regions \citep{stock2002macroeconomic,boivin2006more}. In the existing literature, regional CPI forecasting is typically conducted using parametric panel models based on observable macroeconomic indicators, including random effects \citep{balestra1966pooling}, fixed effects \citep{mundlak1978pooling}, the Swamy estimator \citep{swamy1970efficient}, and their extensions \citep{hsiao1974statistical, hsiao1975some}. However, official CPI series and related macroeconomic indicators are often scarce, low-frequency, and costly to collect, particularly at regional level, making accurate forecasting both statistically difficult and operationally consequential \citep{pesaran2006estimation, bai2008forecasting}.

One natural direction to alleviate this informational bottleneck is to move beyond traditional macroeconomic indicators and incorporate high-frequency narrative signals. Narrative economics emphasizes that economic fluctuations are shaped not only by fundamentals but also by the diffusion of stories and collective beliefs \citep{shiller2020narrative, shiller2020popular}. Public discussions about prices and living costs on social media platforms such as {\it Sina Weibo} provide high-frequency signals of inflation expectations, which can be incorporated into econometric models in vectorized form \citep{larsen2021news, hong2025forecasting}. However, these vectorized text features are often noisy and unstable, high-dimensional embeddings introduce complex nonlinear trends, and standard panel estimators, parametric or nonparametric, struggle to integrate large-scale narrative signals while accounting for cross-regional heterogeneity. The main challenge is therefore to translate narrative diffusion into structured, interpretable, and regionally adaptive predictors within a scalable panel framework.

At the same time, Large Language Models (LLMs) provide a new opportunity to operationalize narrative measurement. By rapidly reading and summarizing massive corpora, LLMs can transform unstructured text, such as millions of time-stamped social media posts, into quantitative sentiment and belief scores for regional CPI. Therefore, it is possible to design sequential scoring prompts to obtain high-frequency inflation sentiment indices at the regional level. Yet LLM-generated predictions are inherently noisy, potentially unstable, and costly to implement; their black-box nature also raises concerns about reliability and reproducibility. Recent advances in prediction-powered and surrogate-assisted inference \citep{angelopoulos2023prediction, zrnic2024cross, mccaw2024synthetic, bashari2025synthetic, fan2025llm} highlight how machine-learning predictions can be incorporated into statistical procedures while preserving valid inference. Still, most existing frameworks focus on cross-sectional settings and impose strong assumptions on distributional similarity between target and surrogate processes. Extending these ideas to dynamic panel forecasting, where LLM-based signals serve as high-frequency surrogates for low-frequency inflation outcomes, poses a fundamental challenge: how to leverage the informational content of LLM predictions without inducing bias, instability, or model misspecification in heterogeneous regional environments.

To address these challenges, we introduce a unified LLM-powered framework for regional CPI prediction with the following \textbf{three-fold strategy}:  
\begin{enumerate}
    \item \textbf{Enriched data source:} we assemble a novel, large-scale, high-frequency, geo-tagged social-media dataset and use LLMs to extract inflation-related surrogate signals from online text. 
    \item \textbf{Joint modeling:} we integrate official low-frequency CPI data and high-frequency LLM-based surrogates within a new unified residual-joint-modeling framework, combining data enrichment and model innovation to improve forecast accuracy.  
    \item \textbf{Improved model capacity:} we develop a nonlinear, deep neural network (DNN) based panel model with homogeneity pursuit to capture complex, region-specific, and time-varying inflation dynamics.  
\end{enumerate}

First, enriched data help mitigate a fundamental bottleneck: monthly CPI is too infrequent and coarse to reflect real-time movements in local economic conditions.  
To overcome this limitation, we construct a new social media based dataset, which consists of a million-scale, multi-year panel of geo-tagged posts from {\it Sina Weibo}, one of China's largest social-media platforms. On {\it Sina Weibo}, users frequently discuss prices, cost-of-living concerns and local consumption experiences.  
Using an LLM-based extraction system, we transform millions of posts into province-level, daily surrogate indices that track inflation-related sentiment and local price perceptions. These high-frequency surrogates provide granular signals that complement sparse official CPI releases.

Second, we advocate a new residual-based joint modeling approach to combine the low-frequency CPI and high-frequency LLM-derived surrogates. We decompose prediction errors into:  
(i) a structured component explainable by surrogate residuals, and  
(ii) an idiosyncratic component.  
This decomposition reduces residual variance, stabilizes the surrogate signal, and yields more accurate point forecasts.  
Moreover, it provides the foundation for a distribution-free conformal prediction procedure, which uses these variance-reduced residuals to construct tighter and more stable prediction intervals than target-only models.

Finally, enhanced model capacity is essential because regional inflation dynamics often exhibit nonlinear interactions across time, covariates, and unobserved regional factors. A deep-learning architecture is naturally suited for capturing such complexity; a representative is the two-stage DNN-based panel approach proposed by \cite{chronopoulos2023deep}. However, excessive flexibility can be detrimental: without additional structure, a highly expressive model may overfit noise, obscure meaningful heterogeneity, and produce unstable forecasts. To this end, a key innovation of our newly developed deep panel modeling strategy is the involvement of a new homogeneity pursuit module. By introducing a classifier-Lasso \citep{su2016identifying} based  output layer during the training process, the new method identifies latent groups with similar economic responses and allows units within the same group to share parameters. This grouping strategy mitigates overfitting by reducing effective model complexity in high-dimensional settings, and enlarges the sample size available for estimating each group-level coefficient vector, thereby improving the stability and accuracy of the learned nonlinear relationships. 

All together, enriched surrogate data, residual joint modeling and nonlinear deep modeling with homogeneity pursuit form a coherent statistical framework for regional CPI forecasting. We refer to the resulting framework as \textit{LLM-powered Deep Panel Modeling (LDPM)}. It combines a deep neural network backbone with a classifier-LASSO module that induces structured sparsity and recovers latent regional clusters, maintaining interpretability while accommodating nonlinear structure. The proposed framework significantly improves provincial CPI forecasts relative to classical panel models and time-series benchmark methods. Moreover, the inferred latent regional groupings align with meaningful economic and geographic patterns, offering new insights into heterogeneous inflation dynamics across China. Although our LDPM framework is motivated by regional CPI prediction, it is directly applicable to other tasks that require prediction intervals in panel settings with scarce target data but rich surrogate information.

The remainder of the paper is organized as follows.
Section~\ref{Sec.LLM-CPI} describes the original macroeconomic data and external social media text data, together with the usage of LLMs to extract economic insight from the text data. Section~\ref{Sec.LLM-panel} introduces the proposed LDPM forecasting framework.
Section~\ref{Sec.empirical} presents the empirical analysis of provincial CPI in China and discusses the associated policy implications.
Section~\ref{Sec.simulation} reports results from numerical simulation studies.
Section~\ref{sec.discu} concludes the paper. Additional details of real data analysis, the identified and optimization of LDPM, the theoretical guarantees of conformal prediction intervals can be  found in the Supplementary Material.

\section{Data Description and Economic Insight of LLM} \label{Sec.LLM-CPI}

To forecast the regional CPI, we collect the official monthly regional CPI series released by the National Bureau of Statistics of China, titled ``Consumer Price Index (Previous Month = 100)''. To ensure comparability across regions, we standardize the CPI data by subtracting 100 from each CPI observation and dividing the result by the cross-sectional standard deviation of CPI series of each region. We also have access to two macroeconomic variables, regional Gross Domestic Product (GDP) and unemployment rate.

In addition, we obtain a high-frequency, province-level online dataset from {\it Sina Weibo} (\href{https://www.weibo.com}{weibo.com}), one of the largest social media platforms in China. This dataset captures real-time public discourse related to consumer price movements and inflation sentiment across regions. We first describe the data collecting and processing pipelines that assemble a large-scale corpus of price-related posts. Next, we develop a prompt-based LLM procedure, together with three specialized fine-tuned BERT, to identify the truly inflation-related narratives from the collected {\it Weibo} data, and to construct inflation sentiment scores that serve as informative surrogates of regional CPI. Finally, we embed the relevant text by mapping the relevant unstructured posts into numerical semantic representations for subsequent predictive modeling. For a preview, Figure~\ref{fig:workflow_overview} summarizes the entire workflow, including the data collection and preprocessing, high-frequency panel inflation index construction, and text embedding.

\begin{figure}
    \centering
    \includegraphics[width=1\linewidth]{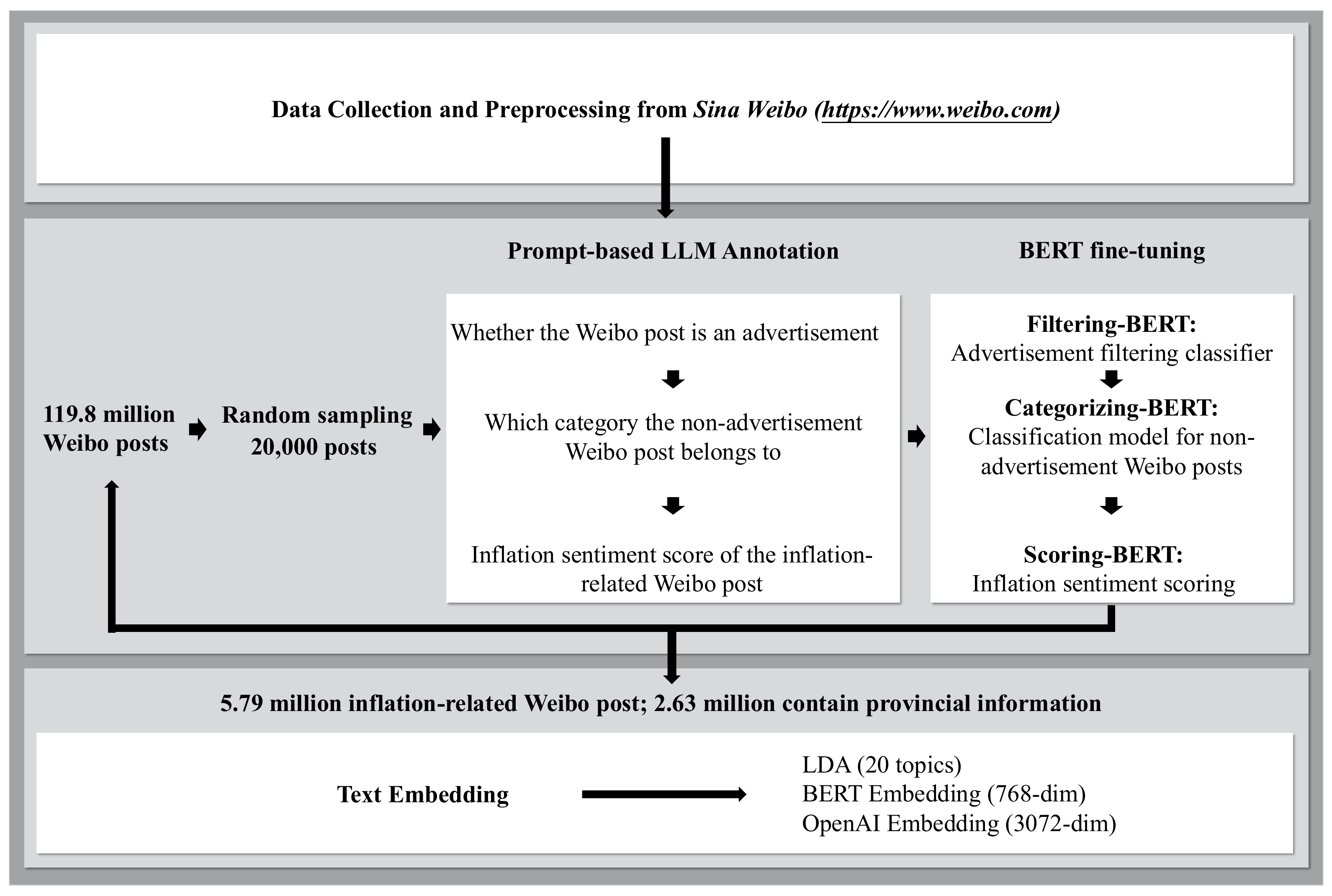}
    \caption{The overall workflow of data collection and preprocessing, high-frequency panel inflation index, and text embedding construction}
    \label{fig:workflow_overview}
\end{figure}

\subsection{Sina Weibo Corpus and Preprocessing} \label{sec:weibo_data}
{\it Sina Weibo} is one of the largest social media platforms in China and functions as a real-time information channel for both consumers and journalists. As of March 2024, the platform reports 588 million monthly active users. Weibo hosts extensive discussions spanning politics, technology, and economic affairs, and its role in shaping public discourse, particularly in financial and macroeconomic contexts, has been widely documented in the literature \citep{feng2019top, qin2024social}. Analogous to {\it Twitter}, posts on Weibo are immediately visible to followers and can be reposted, generating rich temporal and spatial diffusion patterns across regions. Because users frequently share personal experiences and subjective assessments related to daily prices, housing costs, and overall living expenses, Weibo offers a unique high-frequency lens into consumer perceptions of inflation. These publicly accessible posts, retrievable through the platform's built-in search engine, constitute a valuable data source for constructing province-level, time-indexed panel datasets that capture grassroots economic sentiment with temporal granularity comparable to, and often exceeding, that of traditional survey-based measures.

To identify discussions reflecting consumer price experiences, we adopt a keyword-based web-scraping strategy following \cite{angelico2022can} and adapt it to the polysemous nature of the Chinese language. The keyword dictionary contains 25 terms grouped into three semantic categories: (i) general price-related expressions (e.g., ``price", ``cost", ``rent"), (ii) inflation-related terms (e.g., ``price rise", ``expensive", ``inflation"), and (iii) deflation-related terms (e.g., ``price drop",  ``cheap", ``deflation"). Given the central role of housing costs in shaping consumption and saving behavior in China, we additionally incorporate housing-related keywords as complementary indicators of price perceptions. A detailed list of the keywords with their Chinese-English translations is provided in Table~\ref{tab:keyword}. We then collect all publicly available posts containing at least one of the selected keywords from January 1, 2019, to December 31, 2023. For detailed information on the data preprocessing process, please refer to the Supplementary Material \ref{app:data}. 

\begin{table}[!ht]
\centering
\scriptsize
\caption{Chinese-English keyword dictionary for identifying price-related discussions on {\it Sina Weibo}.}
    \label{tab:keyword}
    \begin{tabular}{cccccc}
        \toprule
        \textbf{Category} & \textbf{Chinese} & \textbf{English Translation} & \textbf{Category} & \textbf{Chinese} & \textbf{English Translation} \\
        \midrule
        \multirow{13}{*}{Housing} 
        & 租房 & Renting a house & \multirow{5}{*}{General price} & 价格 & Price  \\
        & 买房 & House buying & & 钱 & Money \\
        & 卖房 & House selling & & 成本 & Cost \\
        & 房价 & House price & & 费用 & Fee  \\
        & 二手房 & Used house &  & 油价 & Oil price \\
        \cmidrule{4-6}
        & 租金 & Rental fee & \multirow{4}{*}{Inflation} & 通货膨胀 & Inflation \\
        & 新房 & New house & & 涨价 & Price rise \\
        & 房屋价格 & House price & & 贵 & Expensive \\
        & 房租 & Housing rent & & 涨 & Rise (increase) \\
        \cmidrule{4-6}
        & 房贷 & Mortgage & \multirow{4}{*}{Deflation} & 通货紧缩 & Deflation \\
        & 房地产 & Real estate & & 降价 & Price reduction \\
        & 楼市 & Property market & & 便宜 & Cheap \\
        & & & & 跌 & Decline (decrease) \\
        \bottomrule
    \end{tabular}
\end{table}

\subsection{Surrogate Inflation Score Construction by LLMs} \label{textindexconst}

Although a substantial fraction of {\it Sina Weibo} posts contains information relevant to consumer price movements, a large share of the raw corpus consists of advertisements, e-commerce promotions, and other commercial or entertainment-oriented content that is unrelated to inflation dynamics. These noise are difficult to remove using conventional text-mining approaches, such as dictionary-based filters, vector space models, TF-IDF representations, or unsupervised topic models such as Latent Dirichlet Allocation (LDA), which typically lack the semantic precision required to distinguish contextual price discussions from promotional language \citep{angelico2022can}. Moreover, identifying inflation-related semantic content in social media text is intrinsically challenging due to overlapping topical categories, informal writing styles, and the heterogeneous ways in which users express opinions and personal experiences. These challenges are compounded by severe class imbalance: posts that meaningfully reflect inflation narratives constitute only a small subset of the overall corpus, rendering direct supervised classification approaches ineffective or unstable. To tackle these challenges, we are motivated to utilize LLMs as semantic filters to identify and retain inflation-relevant posts via carefully designed prompts, constructing a high-quality textual corpus that is suitable for subsequent sentiment quantification and inflation forecasting. However, solely relying on LLMs for annotation is computationally and financially infeasible at this scale of text data.

To this end, we design a prompt-based LLM annotation procedure via Generative Pre-trained Transformer (GPT), followed by a series of custom fine-tuned Bidirectional Encoder Representations from Transformers (BERT) models \citep{devlin2019bert}, to extract the semantic insight of social media text. GPT provides high-quality semantic annotations through its strong contextual understanding and few-shot learning capabilities, producing labels that are highly consistent with expert judgment \citep{gilardi2023chatgpt, korinek2023language}. These GPT-assisted labels are then used to fine-tuned BERT models, which leverage transformer-based representations to capture contextual dependencies and enable accurate classification of inflation-related narratives with much less computational and financial costs. 

To be specific, starting from the full corpus of 119.8 million {\it Sina Weibo} posts, we randomly sample 20{,}000 posts for prompt-based GPT annotation with manual inspection. The annotation process consists of three sequential tasks: identifying advertisements, classifying non-advertisement posts by content category, and assigning inflation sentiment scores to inflation-related posts. The designed sequential prompts used to train GPT are provided in the supplementary material. These annotated samples are treated as labeled training data and are used to fit three specialized BERT models: a Filtering-BERT for filtering promotional content, a Categorizing-BERT for identifying truly inflation-related posts, and a Scoring-BERT for predicting the inflation sentiment score for the identified posts. Together, these three BERT models are then applied to the original 119.8 million posts for identifying the truly relevant posts and generating the associated inflation sentiment scores. The overall workflow is summarized in Figure~\ref{fig:workflow_overview}.

This framework achieves strong and stable performance across all validation datasets. The Filtering-BERT attains an average precision (AP) of 0.97 and an area under curve (AUC) of 0.92 in identifying non-advertisement content. The Categorizing-BERT achieves an AP of 0.86 and an AUC of 0.97 in detecting inflation-related categories.
For the continuous Scoring-BERT model, its prediction mean squared error (PMSE) is 0.056.
Taking Example 1 and 2 in Table~\ref{tab:bert_example} for instance, they are first identified as non-advertisement content via Filtering-BERT, then classified as inflation-related via Categorizing-BERT, and finally assigned a high and low inflation sentiment score, 0.9 and 0.2, respectively. This tiered framework progressively removes irrelevant material, mitigates class imbalance, and preserves inflation-relevant content with high precision. 

After applying the multi-stage semantic filtering and sentiment annotation procedure, we obtain a clean corpus of approximately 5.79 million inflation-related posts. Table~\ref{tab:weibo_summary} provides an overview of the {\it Sina Weibo} dataset from January 1, 2019 to December 31, 2023. From approximately 120 million raw posts, our multi-stage LLM-based cleaning and classification pipeline identifies about 5.79 million inflation-related narratives. Additional detailed cleaning procedures and annual breakdowns are reported in the Supplementary Material.
We then extract and standardize geographic references within each post to associate textual discussions with specific regions, resulting in 2.63 million inflation-related posts spanning all 31 provinces, municipalities, and autonomous regions of China. This province-time panel dataset serves as the basis for subsequent modeling and embedding. Figure~\ref{fig:volume_inflation} plots the daily volume of inflation-related narratives on Weibo from 2019 to 2023, illustrating the intensity of public discussions about prices and inflation at a high temporal frequency. Such high-frequency text data tend to offer critical information on inflation that is overlooked by the standard monthly released official CPI indexes.

\begin{table}[!ht]
\centering
\scriptsize
\caption{Examples of the three-stage BERT workflow from two {\it Sina Weibo} posts.}
\label{tab:bert_example}
\begin{tabular}{lll}
\toprule
\textbf{Weibo Post Attributes} & \textbf{Example 1 (Inflation)} & \textbf{Example 2 (Deflation)}\\
\midrule
\textbf{Weibo Text} & \tabincell{l}{``今年药品价格涨太多了甚至\\有的成本翻了两倍不止。''} & \tabincell{l}{``猪肉降价了，赶紧整点排骨，回家给孩子\\咕嘟一锅，老香了。''} \\
\textbf{English Translation}  & \tabincell{l}{``Drug prices are crazy this \\year - some costs have more \\than doubled.''} & \tabincell{l}{``The pork price drops, so I grab some ribs,\\ head home, and stew a big pot for the kid -\\smells absolutely amazing.''}\\
\textbf{IP Location} & Beijing & Jilin \\
\textbf{Timestamp} & 2023-11-12 14:35:27 & 2022-08-04 10:28:21 \\
\midrule
\textbf{Three-stage BERT}\\
\midrule
\textbf{Stage 1: Advertisement-BERT} & Non-advertisement & Non-advertisement\\
\textbf{Stage 2: Category-BERT} & Inflation-related & Inflation-related\\
\textbf{Stage 3: CPI-BERT} & Inflation sentiment score = 0.9 & Inflation sentiment score = 0.2\\
\bottomrule
\end{tabular}
\end{table}

\begin{figure}[!ht]
    \centering
    \includegraphics[width=1\linewidth]{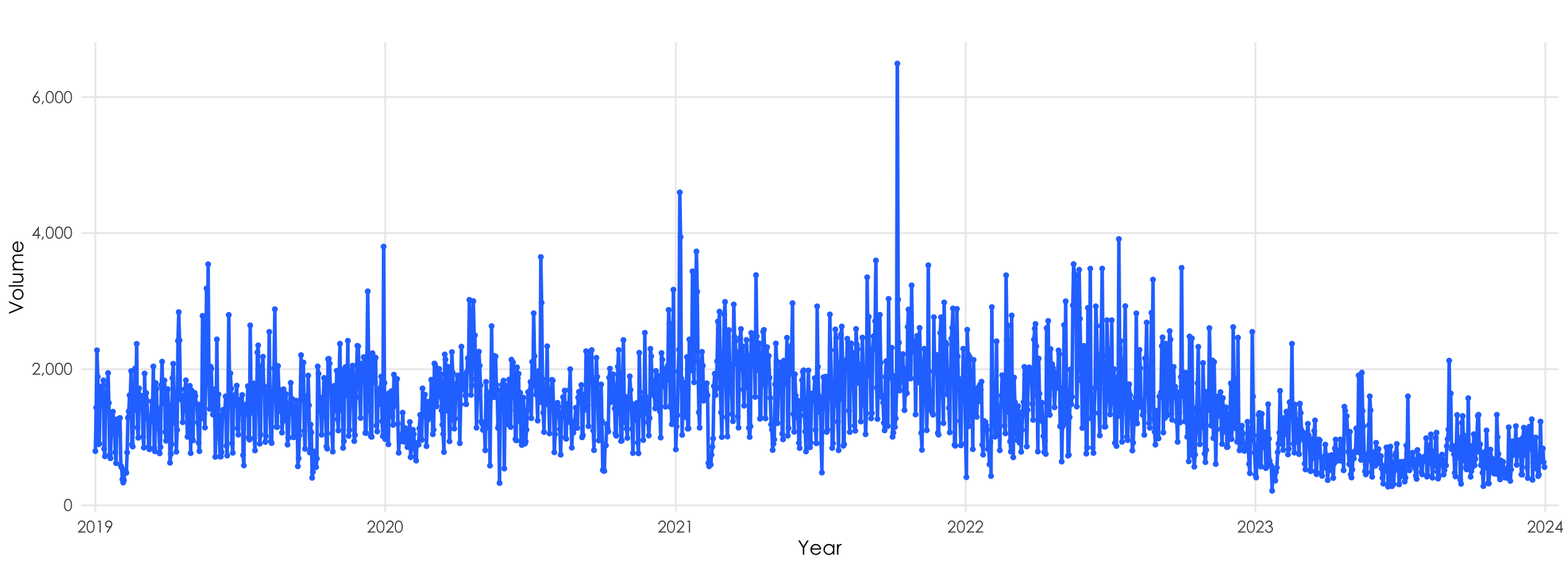}
    \caption{Daily truly inflation-related narrative volume on {\it Sina Weibo} from 2019 to 2023.
}
    \label{fig:volume_inflation}
\end{figure}

In addition, Figure~\ref{fig:volume_province} presents the provincial distribution of inflation-related social media posts collected in this study, with provinces ranked by the total number of posts mentioning inflation-related topics. In total, 2,636,795 posts were collected and geo-located at the provincial level, providing a comprehensive overview of cross-provincial variation in inflation narratives across China. We observe that the inflation-related narratives are unevenly distributed across regions, with substantially higher volumes concentrated in economically developed and densely populated provinces, particularly in eastern and southeastern China. Provinces with large urban populations and higher levels of economic activity tend to generate more inflation-related discussions, reflecting both greater exposure to price fluctuations and higher social media usage intensity.
Moreover, relatively lower posting volumes are observed in western and less populous regions. This spatial heterogeneity highlights the importance of accounting for regional differences when constructing narrative-based indicators. Treating inflation narratives as homogeneous across space may obscure meaningful variation in local economic perceptions and experiences.

\begin{figure}[!ht]
    \centering
    \includegraphics[width=1\linewidth]{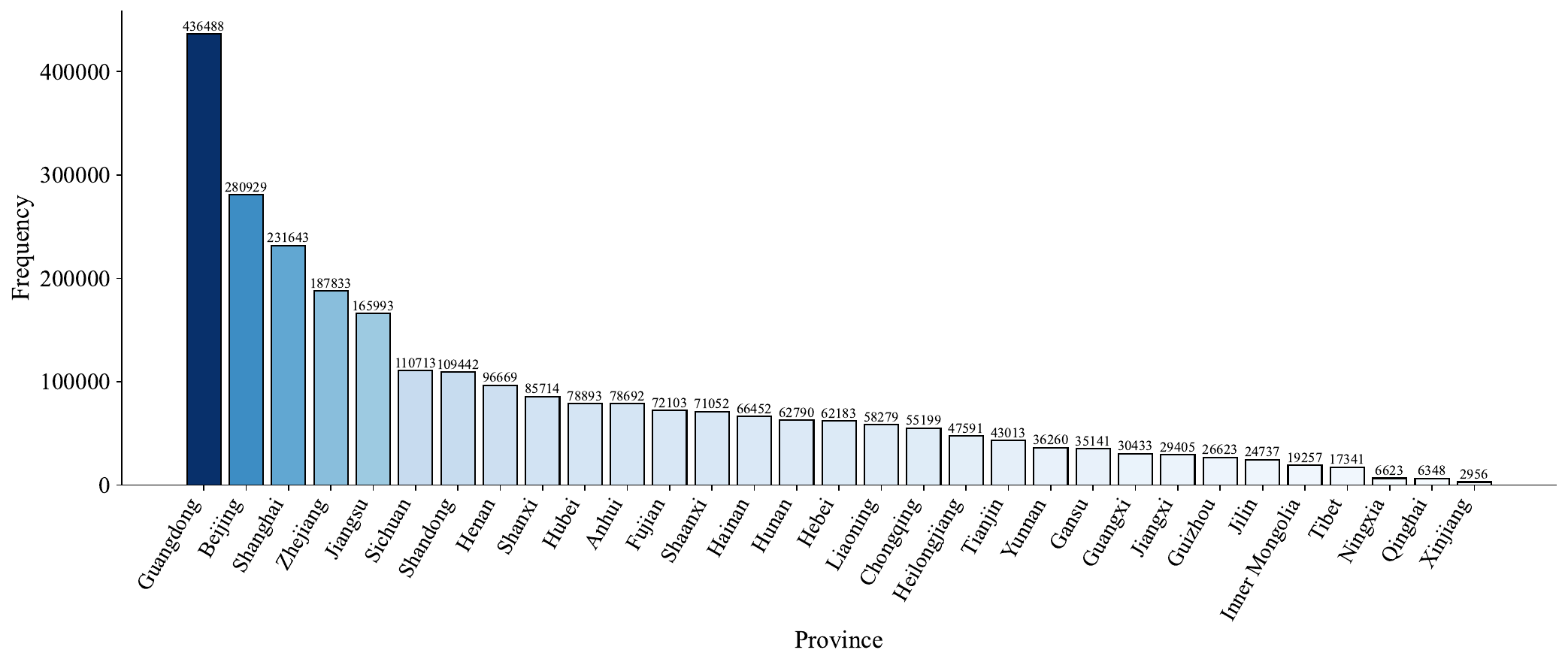}
    \caption{Provincial distribution of inflation-related posts, with provinces ranked by posting volume. The dataset contains a total of 2,636,795 posts.}
    \label{fig:volume_province}
\end{figure}

\subsection{Sina Weibo Text Embedding}\label{textembedding}

For the identified inflation-related posts, the next step is to transform these unstructured textual narratives into quantitative representations suitable for subsequent statistical modeling. 
In this study, we adopt three embedding approaches, including LDA, BERT, and generative OpenAI embeddings.
LDA embeddings represent documents as topic probability vectors, offering good interpretability and low computational cost, but they ignore word order and semantic context due to the bag-of-words assumption \citep{blei2003latent}. BERT embeddings capture contextual semantic information through a transformer architecture and generalize well across domains, though they are higher-dimensional, less interpretable, and computationally more expensive \citep{devlin2019bert}. LLM-based embeddings (e.g., OpenAI text-embedding-3-large) produce rich sentence-level representations that effectively capture complex semantics and cross-domain meanings, but they rely on external APIs and incur higher computational and monetary costs \citep{openai_embeddings}. Table~\ref{tab:weibo_example} illustrates the three types of embedded vectors from a concrete {\it Sina Weibo} post.

\begin{table}[!ht]
\centering
\scriptsize
\caption{Structure of a representative {\it Sina Weibo} record, including text content and three embedding representations.}
\label{tab:weibo_example}
\begin{tabular}{ll}
\midrule
\textbf{Weibo Text} & ``今年药品价格涨太多了甚至有的成本翻了两倍不止。'' \\
\textbf{English Translation}  & \tabincell{l}{``Drug prices are crazy this
year - some costs have more than doubled.''}\\
\midrule
\textbf{LDA Embedding} & $[0.012, 0.006, 0.000, 0.015, \dots , 0.038, 0.010
] \in \mathbb{R}^{20}$ \\
\textbf{BERT Embedding} & $[0.021, -0.134,  0.087,  0.056,  \dots  , -0.036,  0.008]  \in \mathbb{R}^{768}$ \\
\textbf{OpenAI embedding} & $[0.014, -0.072,  0.091,  0.006,  \dots  , -0.019,  0.005 ]  \in \mathbb{R}^{3072}$ \\
\bottomrule
\end{tabular}
\end{table}

\section{A Unified LLM-powered Deep Panel Modeling Framework}\label{Sec.LLM-panel}

Section \ref{Sec.LLM-CPI} introduces how to obtain the inflation insight from social media data using LLM and a series of BERT models, including the surrogate outcome of inflation sentiment scores and embedded vectors from truly inflation-related posts.
In this section, we develop a new joint modeling framework, called \textit{LLM-powered Deep Panel modeling (LDPM)}, that incorporates such generated information to empower the target CPI modeling. We first propose a surrogate augmented model via a residual dependence structure. To adapt the nonlinear trend of the high-dimensional embedding upon fitting the surrogate augmented model, we further develop a modified DNN modeling strategy with region-wise homogeneity pursuit - this has its own significance in panel data modeling literature. 
We note that the proposed framework is not limited to CPI forecasting; it is broadly applicable to panel data analysis in general. It is designed to reduce both the \emph{noise variance} and the \emph{estimator variance} in panel modeling, leading to more accurate and precise prediction. 

\subsection{Joint Modeling via Surrogate Augmentation}\label{subsec:models}

Let $y_{i,t}\in\mathbb{R}$ be the target outcome of primary interest, where $i=1,\ldots,N$ indexes cross-sectional units and $t=1,\ldots,T$ indexes time, typically low-frequency. In our study, $y_{i,t}$ refers to the monthly released regional CPI. Suppose we also observe macroeconomic covariates $\mathbf{z}_{i,t}\in\mathbb{R}^{d_z}$, such as housing price index, unemployment rate, etc. For simplicity, fixed time effects or lagged values of $y_{i,t}$ are absorbed into $\mathbf{z}_{i,t}$. $\{\mathbf{x}_{i,t,k}\in\mathbb{R}^{d_x}\}_{k=1}^{K_t}$ denote the vector embedded from the intra-month online text datasets $\{\mathcal{D}_{i,t,k}\}_{k=1}^{K_t}$, where $K_t$ is the number of posts in month $t$. The \textsc{target} model for predicting $y_{i,t}$ is
\begin{equation}\label{eq:model-general}
y_{i,t} = F_i(\{\mathbf{x}_{i,t,k}\}_{k=1}^{K_t}, \mathbf{z}_{i,t}) + \epsilon_{i,t},
\quad \mbox{where } \ \  \mathbb{E}[\epsilon_{i,t}\mid \{\mathbf{x}_{i,t,k}\}_{k=1}^{K_t},\mathbf{z}_{i,t}]=0,
\end{equation}
where $F_i(\cdot)$ is a flexible nonparametric function for the $i$th unit. 

Further denote the generated surrogate outcome as $\{y^S_{i,t,k}\}_{k=1}^{K_t}$. Then, we impose the following \textsc{surrogate} model:
\begin{equation}\label{eq:surrogate-general}
y^S_{i,t,k} = G_{i}(\mathbf{x}_{i,t,k}, \{y^S_{i,t,k-l}\}_{l=1}^{q_{k}}) + \epsilon^S_{i,t,k},
\quad \mbox{where } \ \  \mathbb{E}[\epsilon^S_{i,t,k}\mid \mathbf{x}_{i,t,k}, \{y^S_{i,t,k-l}\}_{l=1}^{q_{k}}]=0,
\end{equation}
where $G_{i}(\cdot)$ is a nonparametric function that vary across regions, and $\{y^S_{i,t,k-l}\}_{l=1}^{q_{k}}$ is the collection of $q_k$ lagged terms.

Let $\mathbf{\epsilon}^S_{i,t}=(\epsilon^S_{i,t,1},\ldots,\epsilon^S_{i,t,K})^\top$. To reduce noise in~\eqref{eq:model-general}, assume the following nonparametric dependence structure between $\epsilon_{i,t}$ and $\mathbf{\epsilon}^S_{i,t}$:
\begin{equation}\label{eq:proj-gamma}
\epsilon_{i,t} = \Gamma(\mathbf{\epsilon}^S_{i,t}) + e_{i,t}, \quad \mbox{where } \ \  \mathbb{E}[e_{i,t}\mid \mathbf{\epsilon}^S_{i,t}]=0.
\end{equation}

Combining \eqref{eq:model-general}-\eqref{eq:proj-gamma} yields our \textsc{surrogate augmented} model
\begin{equation}\label{eq:joint-Q}
y_{i,t} = Q_i(\{\mathbf{x}_{i,t,k}\}_{k=1}^{K_t},\mathbf{z}_{i,t},\mathbf{\epsilon}^S_{i,t}) + e_{i,t},
\quad \mbox{where } \ \ Q_i(\mathbf{x},\mathbf{z},\mathbf{\epsilon}^S)=F_i(\mathbf{x},\mathbf{z})+\Gamma(\mathbf{\epsilon}^S).
\end{equation}

This formulation integrates both the official economic predictors and the fine-grained surrogate information extracted from LLMs and social media via residual dependence between the target and surrogate models. 

We use the error dependence structure to link the target and surrogate models, rather than directly adding $y^S_{i,t,k}$ to the target model. This is because directly including $y^S_{i,t,k}$, despite its predictive power for $y_{i,t}$, can induce so-called shortcut learning \citep{geirhos2020shortcut}, leading to a collapse in learning the structural response function $F_i(\cdot)$ in the target model and contaminates the homogeneity pursuit step \citep{papyan2020prevalence}. See detailed discussion in Section~\ref{app:shortcut} of the Supplementary Material. By contrast, the surrogate residual $\epsilon^S_{i,t,k}$ captures the \emph{unexplained component} of the surrogate outcome - the portion of latent wisdom from LLM pre-training which is not directly accounted for by observed embeddings. Using surrogate errors therefore provides a cleaner and more targeted auxiliary signal for improving prediction. Additionally, $y^S_{i,t,k}$ often exhibits substantial variance and instability across
different surrogate-model training runs. After removing the mean structure, the error left have markedly smaller variance and are consequently more stable inputs for
predictive inference. 

\subsection{Deep Panel Training: DNN with Hidden Homogeneity Pursuit}\label{sec:deeppanel}

Due to the curse of dimensionality, classical nonparametric methods are impractical for estimating $Q_i(\cdot)$ in the surrogate augmented model \eqref{eq:joint-Q} using the high-dimensional embedded vector $\bx_{i,t,k}$. In such settings, deep neural networks (DNN) offer a powerful solution \citep{schmidt2020nonparametric}. In practice, however, the low-frequency economic and macro-financial datasets are typically not large enough to support fully heterogeneous region-wise DNNs. To address this, we develop a modified DNN training procedure, called {\it Deep Panel Training}, in this section. It incorporates latent group structure by employing a shared feature-extraction network while allowing for heterogeneity through a group-specific final linear output layer. The proposed training procedure itself contributes to the panel modeling literature by bridging the complex deep learning techniques with latent group structures in panel data. 

Specifically, we approximate the surrogate augmented model \eqref{eq:joint-Q} using a shared multi-layer nonlinear feature map together with unit-specific linear heads and the surrogate correction:
\begin{equation}\label{eq:deep-panel-model}
Q^{\mathrm{DNN}}_i(\bx,\bz,\beps^S)
=
\bbeta_i^\top
h\left( \{\bx_{i,t,k}\}_{k=1}^{K_t}, \bz_{i,t}, \beps^S_{i,t}; \bW,\bgamma \right) + b_i,
\end{equation}
where the shared feature map $h(\cdot)$ is generated by an $L$-layer feedforward network,
\begin{equation}\label{h-eq}
h(\boldsymbol\nu;\bW,\bgamma)
=
\sigma\Bigl(
\bW^{(L)}
\sigma\bigl(
\bW^{(L-1)}\cdots
\sigma(\bW^{(1)}\boldsymbol\nu+\bgamma^{(1)})
+\cdots+\bgamma^{(L-1)}
\bigr)
+
\bgamma^{(L)}
\Bigr),
\end{equation}
with activation function $\sigma(\cdot)$, such as ReLU or Sigmoid.  The parameters $\bW=\{\bW^{(l)}\}_{l=1}^L$ and $\bgamma=\{\bgamma^{(l)}\}_{l=1}^L$ are shared across all units, while $\bbeta_i\in\R^{d_h}$ denotes the unit-specific output coefficients.

To further borrow strength across units, we impose a latent grouping structure on the output heads. Specifically, suppose there exist $K_0$ latent centers $\{\boldsymbol\eta_k\}_{k=1}^{K_0}$,  $\{\varphi_k\}_{k=1}^{K_0}$ ,and a group assignment function $g:\{1,\ldots,N\}\to\{1,\ldots,K_0\}$ such that
\[
\bbeta_i = \boldsymbol\eta_{g(i)}, \quad  b_i = \varphi_{g(i)}.
\]
The latent group structure permits heterogeneity while enabling the estimation of common patterns shared within each latent group.

We emphasize that the parameters $(\bW,\bgamma)$ are deliberately shared across all units, while only the final-layer coefficients $(\bbeta_i,b_i)$ vary by region. This design guarantees both stability and identifiability. First, when the number of regions $N$ is large but the per-unit sample size is limited, as is the case for our monthly macroeconomic dataset, fully unit-specific DNN would result in an explosion of parameters and substantial under-training. A shared feature extractor drastically reduces dimensionality and stabilizes estimation. Second, parameter sharing ensures identifiability of the unit-level coefficients $\{(\bbeta_i,b_i)\}_{i=1}^N$ up to label permutations. It is well known that the internal parameters of DNN are typically non-identifiable: even if the underlying group structure is truly present, there may exist another DNNs with parameters $(\tilde{\bW},\tilde{\bgamma},\{(\tilde{\bbeta}_i,\tilde{b}_i)\}_{i=1}^N)$ such that
\[
\tilde{Q}^{\mathrm{DNN}}_i(\bx,\bz,\beps^S)
=
\tilde{\bbeta}_i^\top 
h\left(\{\bx_{i,t,k}\}_{k=1}^{K_t}, \bz_{i,t}, \beps^S_{i,t}; \tilde{\bW},\tilde{\bgamma}\right) + \tilde{b}_i
=
Q^{\mathrm{DNN}}_i(\bx,\bz,\beps^S), i=1,\ldots,N.
\]

In this case, the coefficient vectors $(\bbeta_i,b_i)$ are not uniquely determined, and imposing a group structure would be meaningless. However, with shared parameters $(\bW,\bgamma)$, the equivalence class of DNNs is sharply restricted. Under mild conditions on the activation and architecture in Section~{\ref{app:identifiability}} of the Supplementary Material, any two DNNs implementing the same feature map $h(\cdot)$ must be related only through rotations and scalings of the hidden representation. Concretely, if both parameter sets represent the same function, then there exist an permutation matrix $\bP$ and a positive diagonal scaling matrix $\mathrm{diag}(\Lambda)$ such that
$$
\begin{bmatrix}
(\tilde{\bbeta}_1^\top,\tilde{b}_1) \\
\vdots \\
(\tilde{\bbeta}_N^\top,\tilde{b}_N)
\end{bmatrix}
=
\bP\mathrm{diag}(\Lambda)
\begin{bmatrix}
(\bbeta_1^\top,b_1)\\
\vdots \\
(\bbeta_N^\top,b_N)
\end{bmatrix}.
$$

The shared-parameter architecture substantially restricts the admissible reparametrizations of the network. In particular, if two parameter sets generate the same shared feature map, they can differ only through a common permutation and coordinate-wise scaling of the hidden representation. Accordingly, the collection of output heads $\{(\bbeta_i,b_i)\}_{i=1}^N$ is identified only up to the corresponding common scaling and label permutation. In contrast, without the shared-parameter architecture, each unit-specific DNN may admit arbitrary reparametrizations, rendering $\{(\bbeta_i,b_i)\}_{i=1}^N$ fundamentally unidentifiable and making the latent group structure ill posed. See Section \ref{app:identifiability} of the Supplementary Material for detailed discussion.

\subsection{Two-stage Fitting Algorithm for Deep Panel Training}\label{subsec:estimation}

In this section, we provide the two-stage algorithm that realizes the deep panel training described in Section \ref{sec:deeppanel} for the proposed surrogate augmented model \eqref{eq:joint-Q}.
A central component of our estimation strategy is the recovery of the latent group structure.  
To achieve this, we extend the idea of classifier-LASSO (C-LASSO) \citep{su2016identifying} to the DNN setting by imposing the penalty on the unit-specific output heads of the network. Building on this, the entire estimation procedure consists of the following two stages.

\ \\
\noindent\textbf{Stage 1: Surrogate model fitting.}

We first estimate $G_{i}$ in the surrogate model \eqref{eq:surrogate-general} via regular DNN with all data from region $i$. Then construct surrogate residuals
\[
\widehat{\epsilon}^S_{i,t,k}
=
y^S_{i,t,k} - \widehat{G}_{i}(\bx_{i,t,k},  \{y^S_{i,t,k-l}\}_{l=1}^{q_{k}}),
\qquad
\widehat{\beps}^S_{i,t}
=
(\widehat{\epsilon}^S_{i,1},\ldots,\widehat{\epsilon}^S_{i,t})^\top.
\]

We remark that for estimating $G_{i}$, we may also adopt the same Deep Panel Training with hidden homogeneity pursuit described in Section \ref{sec:deeppanel}. However, the primary fitting objective here is to derive an accurate proxy for the surrogate error $\boldsymbol\epsilon^S$, which is subsequently used in fitting the surrogate augmented model \eqref{eq:joint-Q}. Therefore, provided that the sample size $K_t$ is usually sufficient for separate DNN fittings per region, homogeneity pursuit is not required. 

\ \\
\noindent\textbf{Stage 2: Surrogate augmented model fitting.}

Using $\widehat{\beps}^S_{i,t}$ from Stage~1, we fit the surrogate augmented model \eqref{eq:joint-Q} via Deep Panel Training in two steps. We first use a penalized objective to recover the latent group structure, and then refit the group-specific output heads without penalty to obtain the final coefficient estimates. Let $h_{i,t}=h(\{\bx_{i,t,k}\}_{k=1}^{K_t},\bz_{i,t},\widehat{\beps}^S_{i,t};\bW,\bgamma)$ denote the shared DNN feature mapping in \eqref{h-eq}. Since the shared hidden representation is identified only up to a common coordinate-wise scaling, the unit-specific output heads are only identified up to the corresponding inverse transformation. To make the grouped penalty well defined, we normalize the final hidden representation on a common scale across all units and time periods before comparing the output heads across regions. 

The homogeneity-pursuit objective for estimating $(\bbeta_i,\bW,\bgamma)$ becomes
\begin{equation}\label{equ:jan24:01}
 \min_{\{(\bbeta_i,b_i)\},\bW,\bgamma}
\frac{1}{NT}
\sum_{i=1}^N \sum_{t=1}^T
\left(
    y_{i,t} - \bbeta_i^\top h_{i,t}- b_i
\right)^2
+ \frac{\lambda}{N}
\sum_{i=1}^N
\prod_{k=1}^{K_0}
\left(\big\| (\boldsymbol\eta_k^\top,\varphi_k)^\top - (\bbeta_i^\top,b_i)^\top \big\|_2\right).   
\end{equation}

The product penalty encourages each $(\bbeta_i,b_i)$ to be close to one of the $K_0$ group centers  
$\{ \boldsymbol\eta_k, \varphi_k\}_{k=1}^{K_0}$.  
A discussion of optimization strategies for this loss is provided in Section \ref{app:optimization} of the Supplementary Material.  
This yields the estimated group assignment  
\[
\widehat{g}:\{1,\ldots,N\}\to\{1,\ldots,K_0\},
\]
together with the shared feature parameters $(\widehat\bW,\widehat\bgamma)$. Conditioning on the estimated structure $(\widehat{g},\widehat\bW,\widehat\bgamma)$, for each $k=1,\ldots,K_0$, we can further refit the group centers by solving
\[
(\widehat{\boldsymbol\eta}_k,\hat{\varphi}_k)
=
\arg\min_{\boldsymbol\eta, \varphi}
\sum_{i:\widehat g(i)=k}
\sum_{t=1}^T
\bigl(
    y_{i,t} - \boldsymbol\eta^\top \widehat{h}_{i,t}-\varphi
\bigr)^2,
\qquad
\widehat{\bbeta}_i := \widehat{\boldsymbol\eta}_{\widehat g(i)}, \hat{b}_i:= \hat{\varphi}_{\hat{g}(i)}.
\]

This final refit debiases the group-level parameters after selection  
and yields the final Deep Panel Training (DPT) estimator of $Q_i$:
\begin{equation*}
\widehat{Q}^{\mathrm{DPT}}_i(\bx,\bz,\beps^S)
=
\widehat{\boldsymbol\eta}_{\widehat g(i)}^\top
h\left(
    \{\bx_{i,t,k}\}_{k=1}^{K_t}, \bz_{i,t}, \beps^S_{i,t};
    \widehat\bW,\widehat\bgamma
\right) + \hat{\varphi}_{\hat{g}(i)},
\end{equation*}
where $\widehat g(i)$ is the estimated group assignment from Stage~2.

For the one-step-ahead prediction at time $T+1$,  
we first construct surrogate residuals using the surrogate model:
\[
\widehat{\epsilon}^S_{i,T+1,k}
=
y^S_{i,T+1,k} - \widehat{G}_{i}(\bx_{i,T+1,k},  \{y^S_{i,t,k}\}_{t=1}^{q_{k}}),
\qquad
\widehat{\beps}^S_{i,T+1}
=
(\widehat{\epsilon}^S_{i,T+1,1},\ldots,\widehat{\epsilon}^S_{i,T+1,K_{T+1}})^\top.
\]

Given the observed covariates $(\bx_{i,T+1,k},\bz_{i,T+1})$,  
the point predictor of $y_{i,T+1}$ is
\[
\widehat{y}_{i,T+1}
=
\widehat{Q}^{\mathrm{DPT}}_i
\bigl(
    \{\bx_{i,T+1,k}\}_{k=1}^{K_{T+1}},
    \bz_{i,T+1},
    \widehat{\beps}^S_{i,T+1}
\bigr).
\]

For a longer prediction horizon $h>1$,  
we use a standard rolling (recursive) forecasting scheme whenever $\bz_{i,t}$ includes lagged values of $y_{i,t}$.  
Denote by $\widehat{\bz}_{i,T+h}$ the covariate vector obtained from $\bz_{i,T+h}$ by replacing all lagged outcomes $y_{i,t}$ with their predictions $\widehat{y}_{i,t}$ for $t>T$.  
Then the $h$-step-ahead predictor is
\[
\widehat{y}_{i,T+h}
=
\widehat{Q}^{\mathrm{DPT}}_i
\bigl(
    \{\bx_{i,T+h,k}\}_{k=1}^{K_{T+h}},
    \widehat{\bz}_{i,T+h},
    \widehat{\beps}^S_{i,T+h}
\bigr),
\]
where $\widehat{\beps}^S_{i,T+h}$ is constructed analogously to $\widehat{\beps}^S_{i,T+1}$ using the surrogate model and the contemporaneous text embeddings at time $T+h$. See Section \ref{app:optimization} of the Supplementary Material for the details of computation.

\subsection{Conformal Prediction Inference}\label{subsec:conformal-PI}

Based on the point prediction obtained in Section \ref{subsec:estimation}, we further construct prediction sets using a chronologically split conformal method \citep{xu2023conformal} to depict the prediction uncertainty. 
Concretely, partition the panel data chronologically into the training set 
$ \mathcal{D}_{\text{Tr}} := \{(i,t): i=1,\ldots,N, t=1,\ldots,T_1 \}$, the validation set $\mathcal{D}_{\text{Val}} := \{(i,t): i=1,\ldots,N, t=T_1+1,\ldots,T_2\}$ and the calibration set $
\mathcal{D}_{\text{Cal}}  := \{(i,t): i=1,\ldots,N, t=T_2+1,\ldots,T \}$. We first conduct the \textit{Deep Panel Training} using $\mathcal{D}_{\text{Tr}}$, tuning hyperparameters based on   
$\mathcal{D}_{\text{Val}}$.  
After selecting the optimal hyperparameters,  
we refit the model on to obtain the final estimator $\widehat{Q}^{\mathrm{DPT}}_i$. For each $(i,t)\in\mathcal{D}_{\text{Cal}}$,  
construct the point prediction  
$\widehat{y}_{i,t}=\widehat{Q}^{\mathrm{DPT}}_i(\bx_{i,t},\bz_{i,t},\widehat{\beps}^S_{i,t})$  
and define the absolute-residual nonconformity score
$$
s_{i,t} = |y_{i,t} - \widehat{y}_{i,t}|.
$$

Since the resulting \textit{Deep Panel Training} estimator includes group labels via homogeneity pursuit, we calibrate within groups.  
Let $\widehat g(i)$ denote the estimated group membership of unit $i$.  
For a given group $k$, collect the calibration scores
$$
\mathcal{A}_k
    := \{ s_{j,t} : (j,t)\in\mathcal{D}_{\text{Cal}}, \widehat g(j)=k \}.
$$

Let $m_k := |\mathcal{A}_k|$ and $s_{k,(1)} \le \cdots \le s_{k,(m_k)}$ be the order statistics of $\mathcal{A}_k$.  
The $(1-\alpha)$100\% conformal quantile for group $k$ is
$$
q_k
=
s_{k,(\lceil (m_k+1)(1-\alpha) \rceil)}.
$$

Consider a test point $(i^\ast,t+h)$ with estimated group label  
$k^\ast = \widehat g(i^\ast)$ and point prediction
$$
\widehat{y}_{i^\ast,t+h}
=
\widehat{Q}^{\mathrm{DPT}}_{i^\ast}
    (\bx_{i^\ast,t+h},\hat \bz_{i^\ast,t+h},\widehat{\beps}^S_{i^\ast,t+h}).
$$
The $(1-\alpha)$100\% split conformal prediction set is
$$
\mathcal{C}_{i^\ast,t+h}(1-\alpha)
=
\bigl[
    \widehat{y}_{i^\ast,t+h} - q_{k^\ast},
    \widehat{y}_{i^\ast,t+h} + q_{k^\ast}
\bigr].
$$

This interval achieves marginal $(1-\alpha)$100\% coverage within each estimated group, leveraging both cross-sectional information and the latent structural homogeneity.
Provided that $\widehat{Q}^{\mathrm{DPT}}_i$ is a consistent estimator of the true regression function $Q_i$, the empirical distribution of the nonconformity score
\[
s_{i,t} = |y_{i,t} - \widehat{Q}^{\mathrm{DPT}}_i(\bx_{i,t},\bz_{i,t},\widehat{\beps}^S_{i,t})|
\]
converges to the population distribution of the residuals  
$|y_{i,t} - Q_i(\bx_{i,t},\bz_{i,t},\beps^S_{i,t})|$.  
Consequently, the empirical conformal quantile converges to the population residual quantile,  
and the resulting prediction set attains asymptotic $(1-\alpha)$100\% coverage under mild regularity conditions:
$$
\Pr\left\{y_{i^\ast, T+1} \in \mathcal{C}_{i^\ast,t+1}(1-\alpha)\right\} \to 1-\alpha.
$$

The theoretical guarantees of this conformal inference are provided in Section~\ref{sec:theoryconformal} of the Supplementary Material. There, we establish the coverage properties of the resulting prediction intervals and rigorously show that joint modeling yields shorter intervals compared to that without LLM surrogate augmentation. Intuitively, the efficiency gain of the constructed conformal set stems from two sources. First, the joint modeling strategy decomposes the noise into a \emph{structured} component explainable by surrogate residuals $\beps^S_{i,t}$ and an \emph{idiosyncratic} component $e_{i,t}$. This reduces the variance of the nonconformity score $s_{i,t}$ relative to that constructed from a target-only model. Since the conformal quantile depends on the tail of the distribution of $s_{i,t}$, smaller residual variance leads directly to a tighter prediction interval. Second, the homogeneity pursuit step recovers latent groups and pools information across units belonging to the same group.  
    By replacing $\{\bbeta_i\}$ with shared group-level coefficients  
    $\{\boldsymbol\eta_k\}$, the effective sample size for estimating each coefficient vector increases.  
    More accurate estimation of both $(\widehat\bW,\widehat\bgamma)$ and  
    the group-level heads $\{\widehat{\boldsymbol\eta}_k\}$  
    reduces prediction error, thereby shrinking the distribution of nonconformity scores  
    and yielding narrower conformal sets.

In sum, the entire procedure is unified into the so-called LDPM framework, including the LLM surrogate construction, the joint modeling strategy via surrogate augmentation, the DPT algorithm, and the conformal prediction.

\section{Regional CPI Forcasting via LLM-powered Deep Panel Modeling}\label{Sec.empirical}

\subsection{Variable Description}

In this section, we utilize the proposed LDPM strategy to forecast the monthly regional CPI. The target outcome is denoted as $y_{i,t}$, for $i = 1, \ldots, N$ and $t = 1, \ldots, T$, representing the normalized monthly inflation rate for region $i$ at time $t$. The macroeconomic variables $\bz_{i,t}$ includes the regional GDP and unemployment rate. The surrogate outcome $y_{i,t,k}^{S}$ are constructed from the filtered {\it Sina Weibo} corpus, as detailed in Section~\ref{textindexconst}, and the predictors $\bx_{i,t,k}$ are derived from textual embedding methods in Section \ref{textembedding}. 
For each region and month, the post-level embeddings are aggregated to the day-level through a commonly used max pooling strategy \citep{collobert2011natural,shen2018baseline}, and the resulting vectors are temporally aligned with the daily averages of post-level surrogate outcome scores. Note that we adopt different aggregation strategies for surrogate outcome and textual embedding. The surrogate outcome is designed as an integrated city-day measure of inflation narratives and is therefore constructed by averaging post-level signals, whereas text embeddings aim to preserve the strongest semantic activations across posts, for which max pooling retains the most salient features along each embedding dimension. Detailed descriptions of all variables are provided in Table~\ref{tab:variables}.

\begin{table}[!ht]
\centering
\scriptsize
\caption{Description of variables used in the target and surrogate models}
\label{tab:variables}
\begin{tabular}{ccl}
\toprule
\textbf{Type} & \textbf{Variable} & \textbf{Description} \\ 
\midrule
\multicolumn{3}{l}{\textbf{Target Model}}\\
\midrule
Response & $y_{i,t}$ & Official regional CPI \\[2pt]
\multirow{2}{*}{Predictors}  
& $\mathbf{x}_{i,t}$ & Aggregated post-level embeddings \\[2pt]
& $\mathbf{z}_{i,t}$ & Macroeconomic covariates (regional GDP and unemployment rate) \\[2pt]
\midrule
\multicolumn{3}{l}{\textbf{Surrogate Model}}\\
\midrule
Response & $y^S_{i,t,k}$ & Surrogate CPI index for region $i$ at month $t$, day $k$. \\[2pt]
Predictors & $\mathbf{x}_{i,t,k}$ & Aggregated post-level embeddings for region $i$ at month $t$, day $k$ \\ 
\bottomrule
\end{tabular}
\end{table}

\subsection{Out-of-sample Forecasting}\label{oos}

To assess the out-of-sample forecasting performance of our proposed LDPM framework, we specify three LDPM strategies using different types of textual embeddings: OpenAI embeddings, LDA embeddings, and BERT embeddings.
For comparison, we follow the traditional linear panel modeling approach (LPM) \citep{hsiao2022analysis} and consider a benchmark specification that includes only macroeconomic covariates $\mathbf{z}_{i,t}$ (unemployment rate and regional GDP). Additionally, we also compare models that augment the macroeconomic covariates with the three aforementioned embeddings (LPM-E) following \cite{hong2025forecasting}.

The out-of-sample forecast period spans from $H$ months prior to December 2023. The whole dataset is split temporally into training, validation, and testing (out-of-sample) periods: The validation set consists of the six months immediately preceding the testing period, used for hyperparameter fine-tuning such as the number of hidden layers, neurons, and the adaptive learning rate schedule. 

{\small
$$
\underbrace{\text{Training set}}_{\text{From Jan 2021}} 
\longrightarrow
\underbrace{\text{Validation set}}_{\text{6 months prior to testing period}} 
\longrightarrow
\underbrace{\text{Testing set (out-of-sample)}}_{\text{$H$-month period ending in Dec 2023}}
$$
}

We report the prediction mean squared errors (PMSE) in Table~\ref{tab:real_result_3tables}.

\begin{table}[!htbp]
\centering
\footnotesize
\caption{The $PMSE(H)$ results across different horizons $H$}
\label{tab:real_result_3tables}
\begin{tabular}{ccccccccccc}
\toprule
\textbf{Method} & \textbf{Embedding} & \textbf{$8$} & \textbf{$9$} & \textbf{$10$} & \textbf{$11$} & \textbf{$12$} & \textbf{$13$} & \textbf{$14$} & \textbf{$15$} & \textbf{Ave.}\\ 
\midrule
\textbf{LPM} & \textbf{N/A} & 2.387 & 2.479 & 2.411 & 2.163 & 2.246 & 2.394 & 2.390 & 2.449 & 2.365 \\
\midrule
\multirow{3}{*}{\textbf{LPM-E}} & \textbf{LDA} & 0.638 & 0.685 & 0.805 & 1.019 & 1.055 & 1.037 & 0.954 & 0.984 & 0.897 \\
& \textbf{BERT} & 0.744 & 1.024 & 1.526 & 3.098 & 2.290 & 2.444 & 2.596 & 2.449 & 2.021 \\
& \textbf{OpenAI} & 0.842 & 1.136 & 1.515 & 2.235 & 1.643 & 1.638 & 1.598 & 1.593 & 1.525 \\
\midrule
\multirow{3}{*}{\textbf{LDPM}} & \textbf{LDA} & 0.310 & 0.314 & 0.404 & 0.568 & 0.541 & 0.542 & 0.565 & 0.587 & 0.479 \\
 & \textbf{BERT} & 0.301 & 0.303 & 0.431 & 0.570 & 0.544 & 0.541 & 0.560 & 0.593 & 0.480 \\
 & \textbf{OpenAI} & 0.315 & 0.341 & 0.356 & 0.498 & 0.578 & 0.522 & 0.518 & 0.511 & 0.455 \\
\bottomrule
\end{tabular}
\end{table}

The traditional linear panel model that relies solely on macroeconomic covariates exhibits relatively weak performance across all forecast horizons, with an average PMSE of 2.365, which shows the incapability of low-dimensional parametric macroeconomic panel modeling to capture the complex dynamics of regional inflation. 
Augmenting the linear panel model with textual embeddings leads to improvements in forecasting accuracy, although the performances vary across embedding strategies. The LDA-based embeddings deliver relatively stable gains across horizons, while higher-dimensional embeddings such as BERT and OpenAI tend to introduce additional noise, resulting in cumulating errors especially for longer horizons. The limitation of such modeling procedures results from the inclusion of irrelevant posts and parametric modeling restriction. 

The proposed LDPM uniformly outperforms the existing methods. The substantial gains are primarily driven by the inclusion of the involvement of LLM-powered surrogate and the new DPT learning strategy. The former increments current low-frequency outcomes with limited information with the generated high-frequency surrogate, fully utilizing the unobserved wisdom of LLM without introducing irrelevant noise. The latter enables flexible non-linear learning of high-dimensional textual embeddings, allowing the model to distill economically meaningful semantic signals that are strongly associated with inflation dynamics. In general, the proposed LDPMs consistently outperform their linear counterparts across all embedding types. In particular, the LDPM based on OpenAI embeddings performs exceptionally well, achieving uniformly lowest PMSE across all horizons. This also highlights the strength of general-purpose OpenAI embeddings in capturing broad and forward-looking economic narratives when coupled with an appropriate deep learning architecture. Furthermore, although forecast errors generally increase with the horizon length for all models, LDPM displays markedly greater stability. The LDPM with OpenAI exhibits the flattest error profile, maintaining low PMSE values even at longer horizons. 

\subsection{Conformal Prediction Intervals}

To further assess the uncertainty associated with regional inflation forecasts, we construct prediction intervals using the aforementioned conformal inference. In line with our earlier findings that OpenAI embeddings consistently deliver the strongest predictive performance, we restrict attention here to OpenAI embedding based models, and omit the LDA and BERT embeddings for clearer illustration.

To conduct conformal inference, the full sample is now divided into four temporally ordered subsets:
 the training set used for model estimation, the validation set for hyperparameter tuning, the calibration set used exclusively for constructing conformal prediction intervals, and the testing set for out-of-sample evaluation.
The calibration set consists of the 12 months immediately preceding the testing period, while the testing window spans an $H$-month horizon ending in December 2023. This strictly sequential splitting scheme ensures that prediction intervals are constructed without any look-ahead bias.

{\scriptsize
$$
\underbrace{\text{Training set}}_{\text{From Jan 2019}} 
\longleftrightarrow
\underbrace{\text{Validation set for tuning}}_{\text{6 months prior to calibration period}} 
\longrightarrow
\underbrace{\text{Calibration set for conformal}}_{\text{12 months prior to testing period}} 
\longrightarrow
\underbrace{\text{Testing set for prediction}}_{\text{$H$-month period ending in Dec 2023}}
$$
}

Using the calibration residuals, we apply conformal prediction to obtain finite-sample-valid prediction intervals for regional CPI growth. Figure~\ref{fig:prediction_interval}  illustrates the resulting conformal prediction intervals across provinces in December 2023 in the case of $H$=4. 

\begin{figure}[!ht]
    \centering
    \includegraphics[width=1\linewidth]{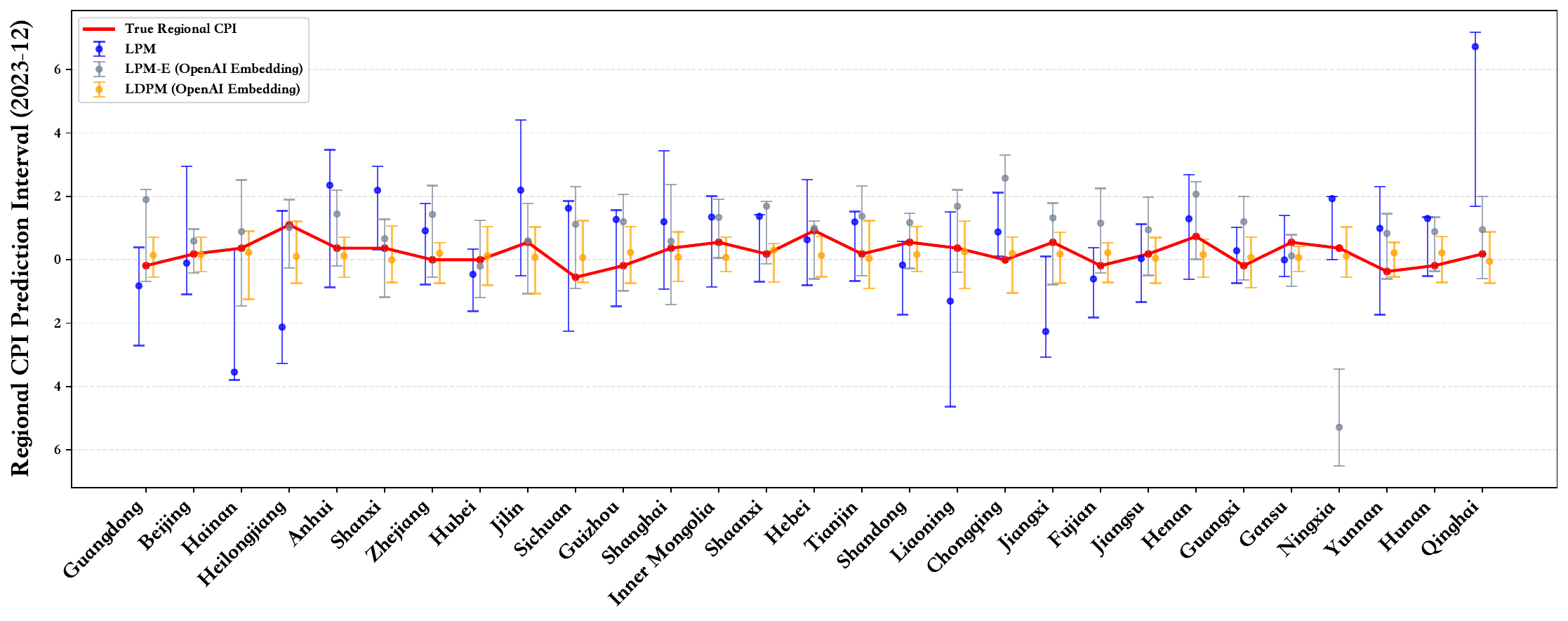}
    \caption{Conformal prediction intervals for regional CPI forecasts based on OpenAI-embedding models. }
    \label{fig:prediction_interval}
\end{figure}

We observe that the traditional linear panel model LPM exhibits wide prediction intervals across most regions, reflecting substantial predictive uncertainty when relying solely on macroeconomic information. While incorporating OpenAI embeddings into the linear panel framework (LPM-E) narrows the intervals in some regions, the resulting prediction intervals remain highly variable and occasionally excessively wide, indicating that high-dimensional embeddings may introduce additional noise when constrained by a linear structure. 

In contrast, LDPM consistently delivers the most compact and stable prediction intervals across provinces. The intervals generated by this model are visibly tighter and more evenly distributed around the realized CPI values, suggesting a substantial reduction in predictive uncertainty. Importantly, this improvement is achieved without sacrificing coverage, highlighting the effectiveness of the Deep Panel architecture in transforming rich AI-based embeddings into economically meaningful predictive signals. Moreover, LDPM demonstrates greater robustness to regional heterogeneity. Even in provinces where the linear models exhibit pronounced uncertainty or extreme interval widths, the LDPM specification maintains well-calibrated and relatively narrow prediction intervals. In sum, the resulting conformal prediction intervals validate the robustness and reliability of the point forecasting of regional CPI provided by LDPM.

\subsection{Sources of Predictive Power}
For deeper understanding how {\it Weibo} narratives provide insights for the CPI forecasting with the aid of LLM, we analyze the topic keywords from the LDA embedding. 
Figure \ref{fig:topic_words_cloud} depicts the top 30 topic words of the 20 LDA topics from the filtered inflation-related {\it Weibo} posts. For interpretability, we assign economically meaningful labels to each topic based on their most representative keywords and relative weights in the topic model. It shows how the discussions on {\it Weibo} cluster into distinct topics that reflect economic forces relevant to CPI forecasting. For instance, persistent chatter about ``price increases'' denotes rising inflation expectations in the public, often foreshadowing official price index upticks, whereas talk of ``decline'' alongside weak demand indicators suggests emerging deflationary pressure. Multiple housing-related topics, such as ``mortgage rate'' and some crisis terms like ``Evergrande'', reflect real estate market stress that dampens consumer spending and suppresses prices in related sectors. Another topic captures wages and living costs (e.g. ``wage'' and ``price level''), revealing consumer sentiment on purchasing power, which can either curb consumption or, if confidence shifts, contribute to a wage-price spiral. {\it Weibo} discourse also anticipates policy shifts, keywords such as ``interest rate cuts'' or local housing measures (``home recognition'', ``loan recognition'' and ``purchase restriction'') signal expected monetary easing and fiscal support aimed at stimulating demand, while attention to global cues (``Federal Reserve rate hikes'', ``global inflation'') ties external cost-push factors to domestic price expectations. The economic narrative theory states that public economic sentiment can drive real outcomes\citep{shiller2020narrative}. Therefore, surges or lulls in these topics serve as early warnings of inflationary or deflationary trends, thereby enrich CPI forecasts with timely insight into evolving demand-supply conditions and inflation expectations.

\begin{figure}[!ht]
    \centering
    \includegraphics[width=1\linewidth]{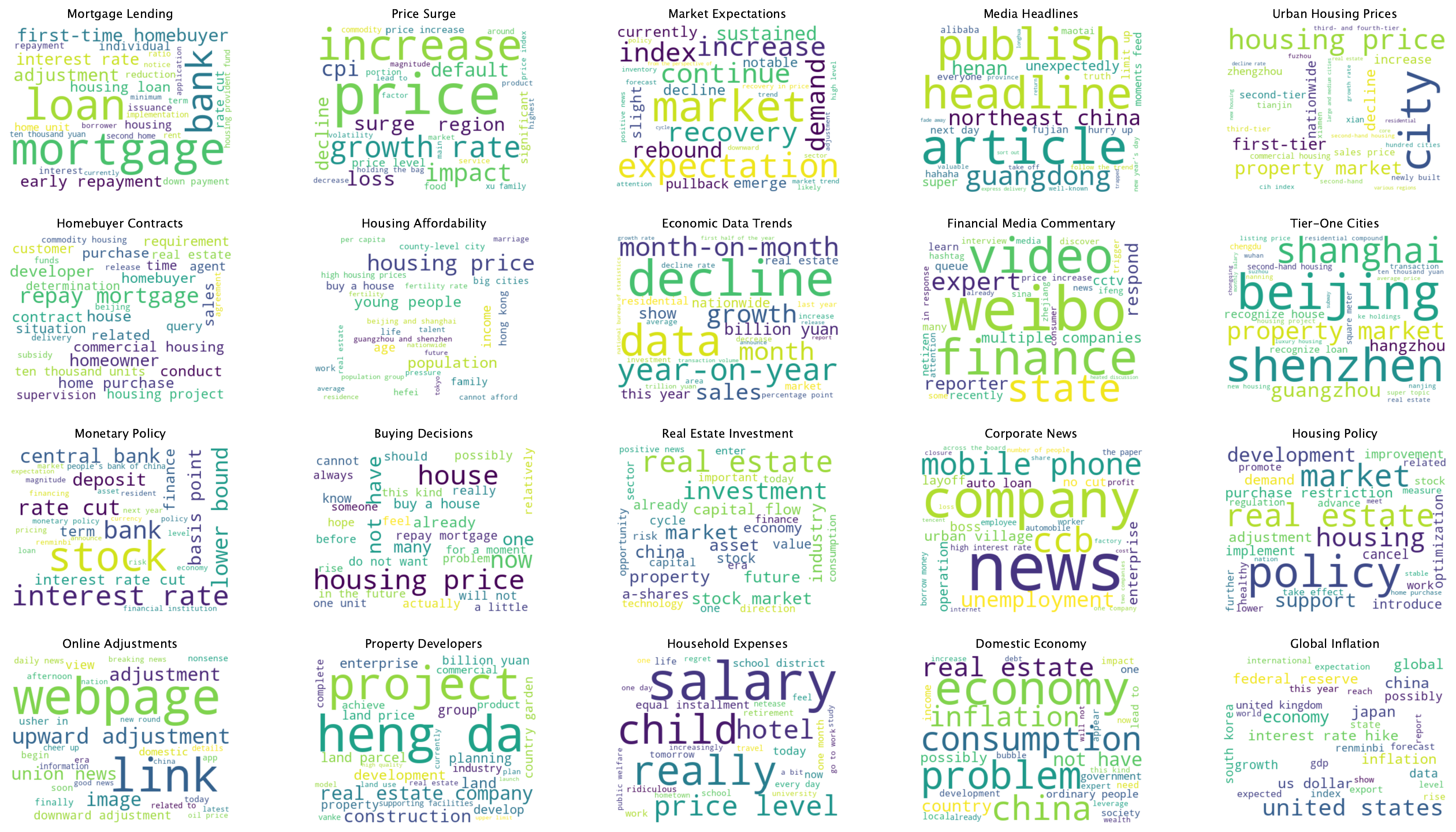}
    \caption{Topic word clouds. The 20 LDA topics are estimated from the full 2019-2023 {\it Weibo} corpus. Each panel presents a word cloud for one topic. The size of each word is proportional to its estimated importance within the topic, measured by the topic-word weight derived from the LDA model. Larger words indicate higher weights and greater relevance to the corresponding topic, while colors are used solely for visual distinction and carry no quantitative meaning.
}
    \label{fig:topic_words_cloud}
\end{figure}

We then examine the temporal distributions of the 20 LDA topics estimated from the full 2019-2023{
\it Weibo} corpus, which allows us to track when particular narratives become more salient in public discussion and how these surges coincide with turning points in inflation. 
Figure \ref{fig:topic_time_pattern} shows that {\it Weibo} narratives about inflation are highly time-varying, and this volatility pattern helps explain why they improve CPI forecasting. Some topics, such as ``Mortgage Lending'', ``Tier-One Cities'' and ``Price Surge'', exhibit pronounced swings, with
sustained elevations around periods of housing policy adjustment, property market stress, while more generic macroeconomy topics (e.g. ``Monetary Policy'',
``Global Inflation'') evolve more smoothly and capture the slow-moving
backdrop. Other topics, such as those centered on distressed developers (`` Property Developers'', ``Buying Decisions''), wage and cost-of-living pressures (``Household Expenses''), display episodic spikes that may coincide with rising household budget stress and changing consumption behavior, which feed directly into components of the CPI. These time-varying patterns mean that sharp increases in specific cost, housing, or income-stress narratives act as high-frequency early warnings of turning points in consumer prices, while the more persistent topics track underlying trend conditions. Because official CPI is low-frequency and backward-looking, the model can exploit the joint dynamics of these volatile and persistent topic weights to detect emerging demand- and supply-side inflation pressures ahead of the official index, thereby improving the timeliness and predictive accuracy of CPI forecasts. To examine region-specific fluctuations in narrative topics, we focus on China's two largest and most representative cities, Beijing and Shanghai, with detailed results reported in the Supplementary Material \ref{sec.dailytopicbjsh}.

\begin{figure}[!ht]
    \centering
    \includegraphics[width=1\linewidth]{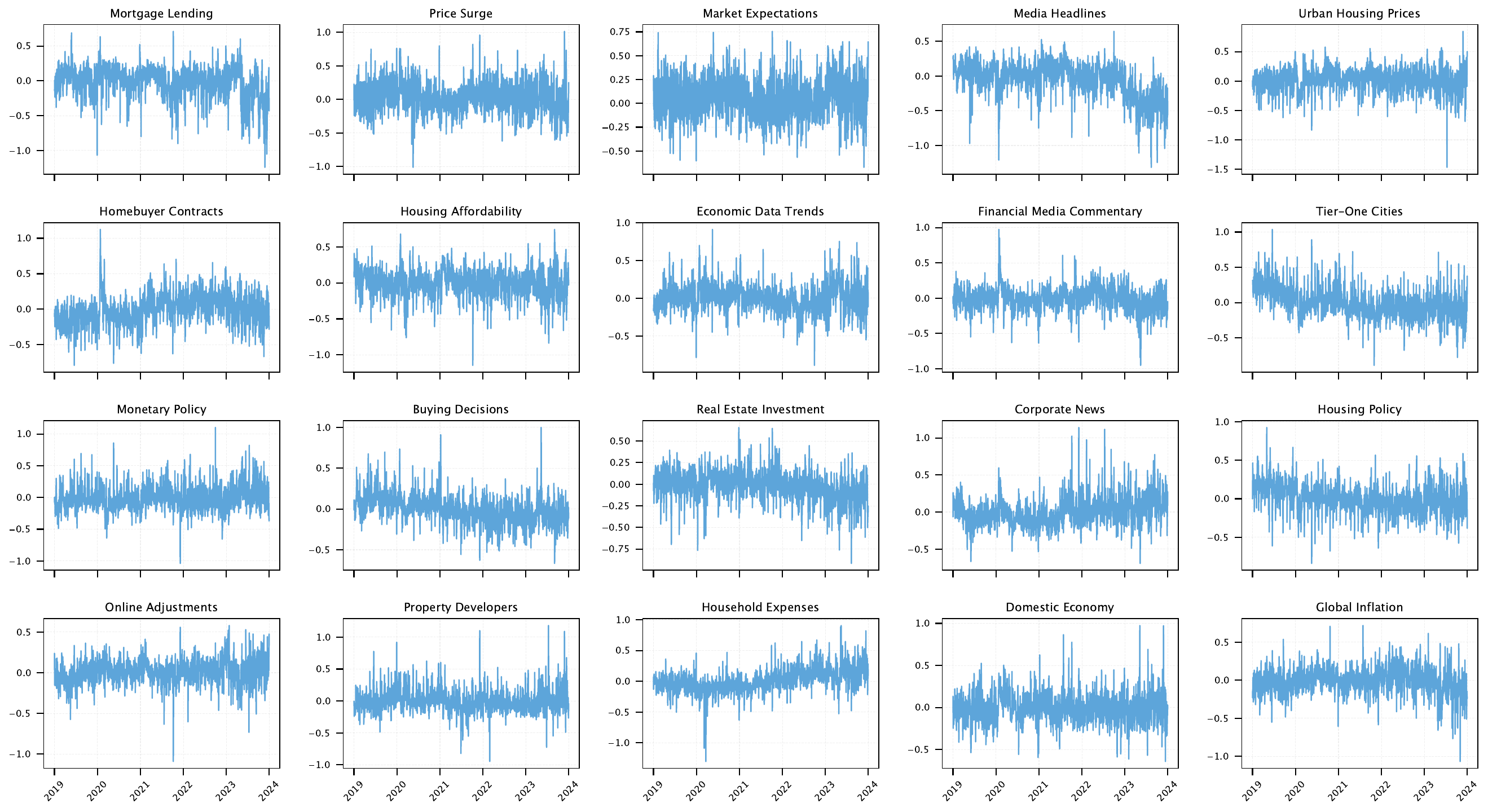}
    \caption{Daily topic probability fluctuations. The 20 LDA topics are estimated from the full 2019-2023 Weibo corpus. Each panel plots the daily average weight of a given topic across all posts, where topic weights are standardized using z-scores. The horizontal axis denotes calendar days, and the vertical axis represents the standardized mean topic weight on each day.
}
    \label{fig:topic_time_pattern}
\end{figure}

So far, we have shown that narratives consistently help predict regional inflation across a wide range of forecasting horizons. A natural next question is where this predictive power comes from. Social media narratives summarize a broad set of economic signals, capturing both how economic agents interpret current conditions and what is happening to prices in the real economy. As a result, narratives may be informative either because they proxy for inflation expectations themselves, because they embed direct information about underlying price movements, or because they combine both channels. In what follows, we disentangle these mechanisms to understand why narratives improve inflation forecasting.

Intuitively, households, entrepreneurs, and bankers face distinct economic constraints and optimization problems. These differences imply heterogeneous information demands in expectation formation. Households primarily make consumption and savings decisions under income constraints and therefore tend to focus on narratives related to consumer prices and living costs. Entrepreneurs face pricing and investment decisions under demand uncertainty and cost volatility, making them more attentive to narratives concerning input prices, sectoral stress, and policy conditions. Bankers operate under credit risk management and financial regulation constraints and thus respond more strongly to narratives about housing-market risk, financial stability, liquidity conditions, and global macro developments. 

To examine how these heterogeneous attention structures translate into expectation formation, we collect survey-based inflation expectations for households, entrepreneurs, and bankers from the quarterly surveys conducted by the People's Bank of China (PBC), the central bank of China.\footnote{Survey data are publicly released by the People's Bank of China. See \url{https://www.pbc.gov.cn/}.} These surveys provide forward-looking measures of inflation expectations for the three groups. Because we cannot observe each individual's full information set, we approximate it using major macroeconomic covariates, CPI, unemployment rate, money supply (M2), residential property sales, and total retail sales, together with social-media-based inflation narratives. Based on model selection (specifically, using LASSO for variable selection), we identify the most informative narrative topics and, following \cite{larsen2021news}, re-estimate the selected variables jointly with macro controls using OLS for statistical inference. Lagged expectations are included as baseline controls. Table \ref{tab:macro_expectations} shows that macroeconomic covariates alone explain a substantial fraction of expectation variation. The adjusted $R^2$ equals 0.613 for consumers, 0.873 for entrepreneurs, and 0.829 for bankers. These results indicate that traditional macro indicators already capture important components of expectation dynamics, particularly for entrepreneurs and bankers.

\begin{table}[!ht]
\footnotesize
\centering
    \caption{OLS regression results of inflation expectations based on macroeconomic covariates}
\label{tab:macro_expectations}
\begin{tabular}{lccc}
\toprule
 & (1) & (2) & (3) \\
 & \makecell{Consumer \\ Expectations} 
 & \makecell{Entrepreneur \\ Expectations} 
 & \makecell{Banker \\ Expectations} \\
\midrule
CPI & $-0.266^{*}$ & $-0.100$ & $-0.112$ \\
Unemployment Rate & $-0.363^{*}$ & $0.538^{***}$ & $0.820^{***}$ \\
M2 Money Supply & $0.312^{***}$ & $-0.129^{***}$ & $-0.185^{***}$ \\
Residential Property Sales & $0.132^{***}$ & $-0.178^{***}$ & $-0.198^{***}$ \\
Total Retail Sales & $-0.235^{***}$ & $0.064$ & $0.136^{**}$ \\
\midrule
Adjusted $R^2$ & $0.613$ & $0.873$ & $0.829$ \\
\bottomrule
\end{tabular}
\end{table}

Table \ref{tab:weibo_narratives} presents the regression results based on inflation narrative topics. In Column~I, where only narrative variables are included, the results show that narratives alone possess considerable explanatory power. The adjusted $R^2$ reaches 0.665 for consumers, 0.787 for entrepreneurs, and 0.660 for bankers. Compared with the macro-only specifications, narrative-only models increase explanatory power by 8.5\% for consumers (0.613 to 0.665), while achieving explanatory levels close to macro-only models for entrepreneurs and bankers. Moreover, specific narrative themes are statistically significant: for consumers, ``Price Surge'' and ``Monetary Policy'' narratives are strongly significant; for entrepreneurs, ``Housing Affordability'' and ``Property Developers'' are significant; and for bankers, ``Housing Affordability'' plays a central role. These findings suggest that narratives encode economically structured information rather than mere sentiment noise. The joint specifications in Column~II provide further insight. When macro controls and narratives enter simultaneously, the adjusted $R^2$ rises from 0.873 to 0.895 for entrepreneurs and from 0.829 to 0.946 for bankers, indicating substantial incremental explanatory power. For bankers in particular, the large improvement (0.829 to 0.946) implies that narrative signals meaningfully augment macro fundamentals in explaining expectation formation. Although some narrative coefficients attenuate once macro variables are introduced, reflecting overlap in information content, several remain significant, especially in the banker regression. This pattern indicates that narratives contain components orthogonal to observable macro aggregates.

\begin{table}[!ht]
\centering
\caption{OLS regression results of inflation expectations based on inflation narrative topics}
\label{tab:weibo_narratives}
\tiny
\begin{tabular}{lcccccc}
\toprule
 & \multicolumn{6}{c}{Inflation Expectations} \\
\cline{2-7}
Topic or Controls 
& \multicolumn{2}{c}{Consumer Expectations}
& \multicolumn{2}{c}{Entrepreneur Expectations}
& \multicolumn{2}{c}{Banker Expectations} \\
 & I & II & I & II & I & II \\
\midrule

Mortgage Lending & 8.116 & -10.556 & $-20.116^{*}$ & 3.999 & -10.675 & $19.623^{*}$ \\
Price Surge & $-32.269^{***}$ & -16.728 &  &  &  &  \\
Market Expectations & -10.652 & -2.285 &  &  &  &  \\
Media Headlines &  &  &  &  &  &  \\
Urban Housing Prices &  &  &  &  &  &  \\
Homebuyer Contracts &  &  &  &  &  &  \\
Housing Affordability &  &  & $-26.106^{**}$ & -10.727 & $-34.165^{**}$ & $-31.586^{**}$ \\
Economic Data Trends &  &  &  &  &  &  \\
Financial Media Commentary & $35.094^{***}$ & 4.539 & -1.367 & 2.454 & -1.402 & -19.233 \\
Tier-One Cities &  &  &  &  &  &  \\
Monetary Policy & $-31.139^{***}$ & -2.415 & 10.931 & -1.851 & 2.893 & 10.830 \\
Buying Decisions & -3.167 & 7.297 & -3.074 & 1.879 & 1.784 & 14.050 \\
Real Estate Investment &  &  & -14.989 & -5.081 &  &  \\
Corporate News & $-72.035^{*}$ & -57.172 &  &  &  &  \\
Housing Policy &  &  & 4.416 & -6.354 &  &  \\
Online Adjustments & -1.352 & 31.397 &  &  &  &  \\
Property Developers &  &  & $-35.028^{**}$ & -0.576 & -29.243 & 8.237 \\
Household Expenses &  &  &  &  & -10.329 & 20.150 \\
Domestic Economy &  &  & 14.700 & -4.181 & 18.897 & -11.130 \\
Global Inflation &  &  &  &  & 0.164 & $17.847^{**}$ \\

\midrule
CPI &  & 0.064 &  & -0.056 &  & -0.226 \\
Unemployment Rate &  & -0.406 &  & $0.398^{*}$ &  & $0.461^{**}$ \\
M2 Money Supply &  & 0.394 &  & -0.161 &  & $-0.548^{**}$ \\
Residential Property Sales &  & 0.177 &  & $-0.204^{*}$ &  & $-0.171^{**}$ \\
Total Retail Sales &  & -0.222 &  & 0.063 &  & $0.200^{**}$ \\

\midrule
Macroeconomic Controls & No & Yes & No & Yes & No & Yes \\
Number of Topics & 9 & 9 & 10 & 10 & 10 & 10 \\
Adjusted $R^2$ & 0.665 & 0.610 & 0.787 & 0.895 & 0.660 & 0.946 \\
Screening Method & LASSO & LASSO & LASSO & LASSO & LASSO & LASSO \\
\bottomrule
\end{tabular}
\end{table}

Taken together, the evidence supports two complementary mechanisms. First, narratives aggregate dispersed, high-frequency microeconomic signals, such as housing-market stress, sectoral cost pressures, and financial risk, that are not yet fully incorporated into official macro statistics. Second, narratives encode interpretative frames that shape belief updating. Expectation formation depends not only on observing macro variables but also on interpreting the persistence and nature of economic shocks. Social media narratives reflect and propagate these interpretations. These suggest that inflation narratives constitute an independent information channel that both overlaps with and extends beyond traditional macroeconomic indicators. The heterogeneous significance patterns across households, entrepreneurs, and bankers further indicate that narrative information operates through group-specific attention structures rather than through mechanical correlation with aggregate macro variables.

\section{Simulation Studies} \label{Sec.simulation}

In this section, we conducts simulation studies to evaluate the performance of the proposed LDPM. The simulations are based on synthetic data constructed to mimic the data structure of our real data analysis. 
We compare our LDPM method with several alternative specifications, including LPM and LPM-E mentioned in Section \ref{oos}, and the performance of prediction is evaluated by the PMSE. Given a prediction horizon of $H$ months, we train each model using observations from January 2019 through December 2023 minus $(H+1)$ months. The out-of-sample predictive performance is then evaluated on a testing sample covering the final $H$ months, spanning from December 2023 minus $(H-1)$ months through December 2023. The experiments assess model performance under varying degrees of correlation between the error terms of the target and surrogate models. 

The data generating process is as follows. The outcome $y_{i,t}$ and its surrogate $y_{i,t,k}^S$ are generated based on the nonlinear panel models:

\begin{equation}\label{equ:sim_nlinear1}
    y_{i,t} = \beta_i^{\top}Z(x_{i,t}) + \epsilon_{i,t} 
\end{equation}
\begin{equation}\label{equ:sim_nlinear2}
    y_{i,t,k}^{S} = \theta_{i}^{S,\top}Z(x_{i,t,k}) + \epsilon_{i,t,k}^{S} 
\end{equation}
where $\beta_i^{\top} \in \mathbb{R}^{1 \times p'}$, $\theta_i^{S,\top} \in \mathbb{R}^{1 \times p'}$ and $Z(x) = \cos(W x + b)$, with $W \in \mathbb{R}^{p' \times p}$ denoting the weight matrix and $b \in \mathbb{R}^{p'}$ denoting the phase shift vector, where the cosine function is applied element-wise.
In addition, both $x_{i,t} \in \mathbb{R}^{p}$ and $x_{i,t,k} \in \mathbb{R}^{p}$ are embedding vectors generated by OpenAI's large embedding model with $p=3072$, differing only in their aggregation frequency. We set the coefficients $\beta_i$ and $\theta_i^{S}$ as $p' \times 1$ vectors of slope parameters, assume a latent group structure, as shown in Equ~\eqref{beta-structure-nonlinear} and \eqref{theta-structure-nonlinear}.

\begin{equation}\label{beta-structure-nonlinear}
    \beta_i = \begin{cases}
        w_1 & \text{if i } \in \mathcal{G}_1 \\
        \vdots & \vdots \\
        w_{K_0} & \text{if i } \in \mathcal{G}_{K_0}
    \end{cases}
\end{equation}
and
\begin{equation}\label{theta-structure-nonlinear}
    \theta_{i}^{S} = \begin{cases}
        w^{S}_1 & \text{if i } \in \mathcal{G}_1 \\
        \vdots & \vdots \\
        w^{S}_{K_0} & \text{if i } \in \mathcal{G}_{K_0}
    \end{cases}
\end{equation}

The errors for the target and surrogate models are also generated from a multivariate normal distribution $(\epsilon_{i,t}, \epsilon_{i,t,1}^{S} ,\dots, \epsilon_{i,t,K}^{S})^{\top} \sim \mathcal{N}_{K+1}(0, \boldsymbol{\Sigma})$, where covariance matrix $\boldsymbol{\Sigma}$ of errors consists of equal value $\rho$ except the diagonal elements are all 1. The coefficients $\beta_i$ and $\theta_i^{S}$ are treated as fixed rather than random.
We consider three levels of correlation, $\rho = 0.2, 0.5$ and $0.8$, and repeat each simulation 500 times.

Note that it can be cumbersome to include all embedding features generated by LLMs or BERT in the linear prediction models. Unlike traditional covariates, embedding features are abstract representations in a high-dimensional semantic space, making variable screening or feature selection methods conceptually meaningless. Therefore, for linear-based models, we apply a dimension reduction procedure in Section \ref{sec.embeddingDR} of the supplementary material to obtain a compact representation of these embeddings. It is worth noting that our proposed LDPM is inherently capable of handling high-dimensional embeddings through adaptive representation learning, so this dimensionality reduction step is unnecessary in that context.

\begin{table}[!htbp]
\caption{The average PMSE of various nonlinear forecasting methods across different prediction horizons $H$ and correlation levels $\rho$ under nonlinear case}
\label{res:simulation:nonlinear}
\centering
\small
\begin{tabular}{lccccccccc}
\toprule
\textbf{Method} & \textbf{$8$} & \textbf{$9$} & \textbf{$10$} & \textbf{$11$} & \textbf{$12$} & \textbf{$13$} & \textbf{$14$} & \textbf{$15$} & \textbf{Ave.}\\ 
\midrule
\multicolumn{10}{l}{\textbf{Setting} $\rho=0.2$} \\
\midrule
LPM  & 1.832 & 1.732 & 3.242 & 3.412 & 3.282 & 2.433 & 3.340 & 3.240 & 2.814 \\
LPM-E & 1.503 & 1.510 & 1.526 & 1.670 & 1.805 & 1.943 & 1.791 & 1.754 & 1.688 \\
LDPM & 0.978 & 1.025 & 1.040 & 1.036 & 1.040 & 1.014 & 1.019 & 1.329 & 1.060 \\
\midrule
\multicolumn{10}{l}{\textbf{Setting} $\rho=0.5$} \\
\midrule
LPM & 1.668 & 1.591 & 3.294 & 3.308 & 3.231 & 2.219 & 3.199 & 3.101 & 2.701 \\
LPM-E & 1.564 & 1.493 & 1.494 & 1.616 & 1.874 & 1.946 & 1.844 & 1.763 & 1.699 \\
LDPM & 0.932 & 0.925 & 0.949 & 0.946 & 0.951 & 0.951 & 0.960 & 0.971 & 0.948 \\
\midrule
\multicolumn{10}{l}{\textbf{Setting} $\rho=0.8$} \\
\midrule
LPM & 1.565 & 1.634 & 3.198 & 3.339 & 3.217 & 2.218 & 3.204 & 3.018 & 2.674 \\
LPM-E & 1.625 & 1.568 & 1.585 & 1.528 & 1.704 & 1.923 & 1.907 & 1.817 & 1.707 \\
LDPM & 0.831 & 0.866 & 0.936 & 0.874 & 0.846 & 0.890 & 0.893 & 0.889 & 0.878 \\
\bottomrule
\end{tabular}
\end{table}

Table~\ref{res:simulation:nonlinear} reports the average PMSE for all forecasting methods across different prediction horizons $H$ and correlation levels $\rho$. The proposed LDPM method consistently delivers the lowest PMSE across all horizons and all correlation settings, substantially outperforming both LPM and LPM-E. This demonstrates the superior ability of LDPM to capture complex nonlinear relationships between high-dimensional textual embeddings and regional inflation dynamics. The performance advantage of LDPM becomes increasingly pronounced as the correlation parameter $\rho$ rises from 0.2 to 0.8. Since $\rho$ controls the dependence between the target and surrogate error terms, higher values correspond to stronger shared latent signals between macroeconomic outcomes and LLM-based narrative embeddings. The steady decline in LDPM's average PMSE from 1.060 ($\rho=0.2$) to 0.948 ($\rho=0.5$) and further to 0.878 ($\rho=0.8$) indicates that LDPM is particularly effective at exploiting strengthened semantic signals embedded in the auxiliary LLM-based predictors. 

Furthermore, LDPM maintains stable and robust performance across different forecast horizons, whereas LPM suffers from severe performance degradation at medium and long horizons, reflecting the inability of purely macroeconomic linear specifications to capture persistent nonlinear dynamics. LPM-E provides moderate improvements over LPM by incorporating narrative information, but its gains are clearly dominated by LDPM, which fully integrates LLM-based embeddings through a nonlinear joint modeling framework.

Overall, these results provide strong empirical evidence that LDPM not only achieves uniformly superior predictive accuracy, but also becomes increasingly advantageous as the strength of the latent LLM signal grows. This confirms that the proposed framework effectively transforms high-dimensional semantic information into economically meaningful predictive signals, particularly in environments where textual narratives carry stronger forecasting content.

\section{Conclusion and Discussion} \label{sec.discu}

This study proposes a statistically structured framework, called {\it LLM-powered Deep Panel Modeling (LDPM)} that integrates LLM-powered economic narratives with deep panel learning to forecast regional inflation. By extracting high-frequency, region-specific surrogate indicators from social media text using LLM and a series of fine-tuned BERT models, and jointly modeling them with official CPI data, the proposed approach achieves substantial improvements in out-of-sample predictive performance. The gains are particularly pronounced at long forecasting horizons.

Methodologically， the proposed LDPM introduces a unified framework for (1) utilizing the (latent) wisdom of LLM to generate surrogate outcomes that contain valuable information for the primary outcome of interest, (2) connecting the target and surrogate model through a new surrogate augmentation strategy, (3)  developing a modified DNN training procedure for panel data, called {\it Deep Panel Training (DPT)}, that realizes homogeneity pursuit upon deep panel learning and (4) providing conformal prediction inference to obtain the prediction intervals for the improved point forecasts.

The empirical results highlight the importance of both surrogate augmentation from social media via LLM and the new DPT algorithm. First, social media narratives appear to aggregate diverse economic signals, including consumer price experiences, expectations, policy interpretations, and localized market conditions. Compared with macroeconomic covariates, these narratives are available at much higher frequency and react more rapidly to evolving economic conditions, enabling earlier detection of inflationary pressures.
Second, compared with linear panel models or region-by-region estimation, the DPT more effectively captures dynamic patterns and spatial correlations across regions. Importantly, the performance gains from narrative-based surrogate variables are not uniform across regions, suggesting heterogeneous sensitivity of regional inflation dynamics to narrative information. Such heterogeneity likely reflects differences in economic structure, industrial composition, consumption patterns, and information diffusion across regions.

From a practical perspective, the results suggest that LLM-based narrative indicators can serve as a valuable complement to official inflation statistics, particularly for real-time monitoring and short-term forecasting. Early identification of region-specific inflation pressures may enhance the timeliness of policy responses related to monetary policy implementation, price stabilization, and social welfare adjustments. Last but not least, the proposed framework is also highly extensible. Beyond social media data, it can incorporate alternative textual sources such as news articles, policy documents, or online forums. Moreover, the methodology can be readily adapted to other regional macroeconomic outcomes, including employment, consumption, or housing price dynamics.

\bibliographystyle{apalike}
\bibliography{ref}
\newpage
\appendix
\begin{center}{\bf \Large Supplementary Material to ``LLM-Powered Deep Panel Modeling with Application to Regional CPI Prediction''}

\bigskip
\end{center}

This Supplementary Material contains additional details on the real data analysis, the identification and optimization of the LDPM, the theoretical guarantees for the LDPM conformal prediction intervals, and a discussion of shortcut learning. For a matrix $\bW \in \mathbb{R}^{p \times p}$, we denote by $\bW(:,j) \in \mathbb{R}^p$ its $j$th column and by $\bW(j,:) \in \mathbb{R}^p$ its $j$th row. For a set $\mathcal{A}$, $|\mathcal{A}|$ denotes the cardinality of $\mathcal{A}$.

\section{Additional details of real data analysis}
\subsection{Data and High-frequency Panel Inflation Index Construction}
\label{app:data}

During the raw data acquisition stage, we collect publicly available posts from {\it Sina Weibo} using keyword-based retrieval and time-stamped queries designed to capture discussions related to prices, consumption, and cost-of-living conditions. To ensure full compliance with ethical standards and data protection principles, all personally identifiable information—including usernames, user IDs, profile descriptions, and any auxiliary user metadata—is permanently removed at ingestion. The retained fields are limited to the post text, posting timestamp, and geographic identifier (province or city), yielding a fully anonymized corpus. We then apply a sequence of preprocessing filters to eliminate invalid records, including duplicated texts, automated spam content, posts with corrupted encoding, and entries lacking valid timestamps. 

Table~\ref{tab:weibo_summary} reports detailed summary statistics of the {\it Sina Weibo} dataset for the 2019--2023 period. Each observation in the raw dataset corresponds to an individual Weibo post and contains a unique post identifier, a time stamp at the daily frequency, textual content, and user interaction measures, including reposts, comments, likes, topic tags, and @-mentions. Over this five-year period, the raw corpus comprises approximately 119.8 million posts, reflecting the large scale and high engagement of public discourse on the platform.

Starting from the raw corpus, we apply a sequence of preprocessing steps to remove invalid records, duplicated posts, and entries with missing time or location information, yielding approximately 95 million cleaned posts between 2019 and 2023. We then employ an LLM-based advertisement filter (Advertisement-LLM) to identify and exclude commercial and promotional content, which accounts for roughly 24.7 million posts over the sample period. The remaining non-advertisement posts are subsequently classified by a category-level LLM (category-LLM), which assigns each post to a mutually exclusive semantic category.

Across the full 2019--2023 sample, the LLM-based classification identifies approximately 5.8 million posts as inflation-related narratives, which constitute the core textual input for our empirical analysis. Other major categories include lifestyle-related content (about 23.9 million posts), entertainment-related discussions (about 17.4 million posts), sentiment-oriented expressions unrelated to prices (about 9.6 million posts), and news-related content (about 14.1 million posts). These distributions highlight the relative sparsity of inflation-related discussions within the broader social media environment and motivate the use of a multi-stage semantic filtering framework. Together, the cleaned and categorized posts form a high-frequency textual panel that underpins the construction of province-level inflation narrative indicators in the main analysis.

\begin{table}[!ht]
\centering
\tiny
\caption{Annual statistics of {\it Sina Weibo} dataset from 2019 to 2023, including total posts, user interactions, and cleaned data counts.}
\label{tab:weibo_summary}
\begin{tabular}{lrrrrr}
\toprule
\textbf{Category} & \textbf{2019} & \textbf{2020} & \textbf{2021} & \textbf{2022} & \textbf{2023} \\
\midrule
\multicolumn{6}{l}{\textbf{Raw Data Overview}} \\
\midrule
Total Count & 23{,}867{,}397 & 21{,}478{,}002 & 25{,}055{,}762 & 23{,}558{,}065 & 25{,}898{,}679 \\
Reposted Count & 1{,}090{,}153 & 915{,}312 & 1{,}068{,}878 & 1{,}017{,}473 & 1{,}131{,}136 \\
Is-Repost Count & 3{,}429{,}038 & 3{,}119{,}749 & 3{,}525{,}874 & 3{,}215{,}841 & 3{,}733{,}858 \\
Unique Users & 4{,}931{,}671 & 4{,}967{,}343 & 4{,}410{,}642 & 4{,}623{,}454 & 4{,}998{,}838 \\
Unique Topics & 1{,}022{,}506 & 1{,}516{,}800 & 1{,}608{,}734 & 1{,}456{,}434 & 1{,}883{,}884 \\
@ Mentions & 5{,}044{,}893 & 4{,}691{,}122 & 6{,}389{,}873 & 4{,}515{,}820 & 5{,}328{,}074 \\
Total Likes & 12{,}839{,}169{,}800 & 20{,}403{,}910{,}002 & 30{,}258{,}299{,}263 & 20{,}075{,}445{,}906 & 21{,}861{,}713{,}953 \\
Total Reposts & 8{,}648{,}416{,}634 & 9{,}356{,}530{,}796 & 19{,}463{,}395{,}462 & 13{,}335{,}659{,}391 & 14{,}777{,}487{,}978 \\
Total Comments & 3{,}580{,}772{,}240 & 3{,}616{,}734{,}933 & 6{,}134{,}589{,}573 & 4{,}715{,}755{,}638 & 4{,}895{,}875{,}472 \\
\midrule
\multicolumn{6}{l}{\textbf{Cleaned Data Overview}} \\
\midrule
Total (Raw) & 23{,}867{,}397 & 21{,}478{,}002 & 25{,}055{,}762 & 23{,}558{,}065 & 25{,}898{,}679 \\
After Cleaning & 17{,}083{,}975 & 17{,}795{,}385 & 18{,}443{,}240 & 19{,}909{,}625 & 22{,}218{,}395 \\
Advertisement (LLM1) & 4{,}761{,}493 & 4{,}327{,}198 & 4{,}906{,}332 & 5{,}068{,}002 & 5{,}643{,}890 \\
Inflation (LLM2-0) & 1{,}132{,}086 & 1{,}082{,}814 & 1{,}197{,}545 & 1{,}263{,}611 & 1{,}114{,}401 \\
Lifestyle (LLM2-4) & 4{,}363{,}671 & 4{,}272{,}760 & 4{,}306{,}088 & 5{,}408{,}548 & 5{,}533{,}003 \\
Entertainment (LLM2-1) & 2{,}810{,}251 & 2{,}981{,}294 & 3{,}359{,}477 & 3{,}200{,}259 & 5{,}069{,}762 \\
Sentiment (LLM2-2) & 1{,}723{,}666 & 1{,}928{,}517 & 1{,}882{,}868 & 1{,}960{,}937 & 2{,}100{,}086 \\
News (LLM2-3) & 2{,}292{,}808 & 3{,}202{,}802 & 2{,}790{,}930 & 3{,}008{,}268 & 2{,}757{,}253 \\
\bottomrule
\end{tabular}
\end{table}

Based on the filtering inflation-related results of the three large models mentioned in the main text, we further extract and standardize all city- and location-level references appearing in the post texts and map them to province-level administrative units using a curated geographic dictionary. If a post mentions multiple cities or provinces, we interpret it as implicitly discussing inflation-related conditions in multiple regions and assign the post to each corresponding province. This procedure yields a subset of slightly over two million posts with explicit province-level geographic information, which forms the initial analytical corpus for downstream tasks. Aggregating these posts by province and time, we construct a high-frequency province–time panel dataset that underpins the subsequent narrative extraction, sentiment quantification, and predictive modeling exercises.

\subsection{Embedding Dimension Reduction}\label{sec.embeddingDR}

Let $\mathbf{X}$ denote the $NT \times p$ matrix containing embedding vectors, where each row corresponds to one observation for a specific region and time period, and $p$ is the embedding dimension (e.g., $p=3072$ for the OpenAI embeddings and $p=768$ for the BERT embeddings). For any $1 \le r \le \min\{NT, p\}$, let $\sigma_r$ be the $r$-th largest singular value of $\mathbf{X}$, and let $u_r \in \mathbb{R}^{NT}$ and $v_r \in \mathbb{R}^{p}$ denote the corresponding left and right singular vectors, respectively. Fixing an integer $1 \le r_0 < \min\{NT, p\}$, we obtain a rank-$r_0$ singular value decomposition (SVD) approximation of $\mathbf{X}$ as
$$
\mathbf U  \mathrm{diag}(\sigma_1, \ldots, \sigma_{r_0}) \mathbf{V^\top} 
= \sum_{i=1}^{r_0} \sigma_i \mathbf{u}_i \mathbf{v}_i',
$$
where $\mathbf{U} \in \mathbb{R}^{NT \times r_0}$ and $\mathbf{V} \in \mathbb{R}^{p \times r_0}$. 
We then construct the post-SVD feature matrix as $\mathbf{X}\mathbf V$, which provides an $NT \times r_0$ low-dimensional representation of the original embeddings for subsequent regression analysis.
For the LDA embedding features, since each dimension corresponds to a distinct semantic topic, conventional screening or selection procedures can still be meaningfully applied to identify the most relevant topics for inflation prediction.

\subsection{Daily Topic Probability Fluctuations Across Regions}\label{sec.dailytopicbjsh}

Figure \ref{fig:topic_time_pattern_bj} and \ref{fig:topic_time_pattern_sh} report how these topic shares evolve over time and across China's two largest and most representative cities—Beijing and Shanghai (results for other cities are reported in the supplementary materials). These plots show that inflation-relevant narratives on {\it Weibo} are not only noisy but also regionally structured and economically interpretable. In Beijing, several topics display episodic spikes and sharp high-frequency bursts, particularly those associated with policy signals, housing finance, and macro news (visible as short-lived upward jumps in \textit{Housing Affordability}, \textit{Tier-One Cities}, \textit{Corporate News}, and \textit{Domestic Economy}), reflecting Beijing's role as a policy and financial information hub. These bursts tend to coincide with periods of monetary easing, property-market announcements, and international macro shocks, which affect CPI through policy expectations and cost channels rather than direct local prices. By contrast, other Beijing topics (e.g., \textit{Mortgage Lending}, \textit{Market Expectations}, and \textit{Media Headlines}) remain persistent but low-volatility, capturing background macro sentiment rather than immediate shocks, thereby encoding trend information about demand conditions and inflation expectations.

In Shanghai, higher-frequency variability occurs across a broader set of topics (notably \textit{Market Expectations}, \textit{Financial Media Commentary}, \textit{Corporate News}, \textit{Property Developers}, and \textit{Domestic Economy}), with more visible medium-term ramps and clustered volatility, consistent with Shanghai’s status as a consumption-driven and globally integrated market. For example, Shanghai topics linking to commodities, logistics, imported goods, or financial markets exhibit sustained elevations rather than occasional spikes, suggesting that consumer-supply channels and international price pass-through are more continuously reflected in local narratives. Meanwhile, lower-volatility components (e.g., \textit{Price Surge}, \textit{Media Headlines}, and \textit{Homebuyer Contracts}) encode the steady-state consumption environment, such as services and household spending, which feed into CPI's core components.

\begin{figure}[!ht]
    \centering
    \includegraphics[width=1\linewidth]{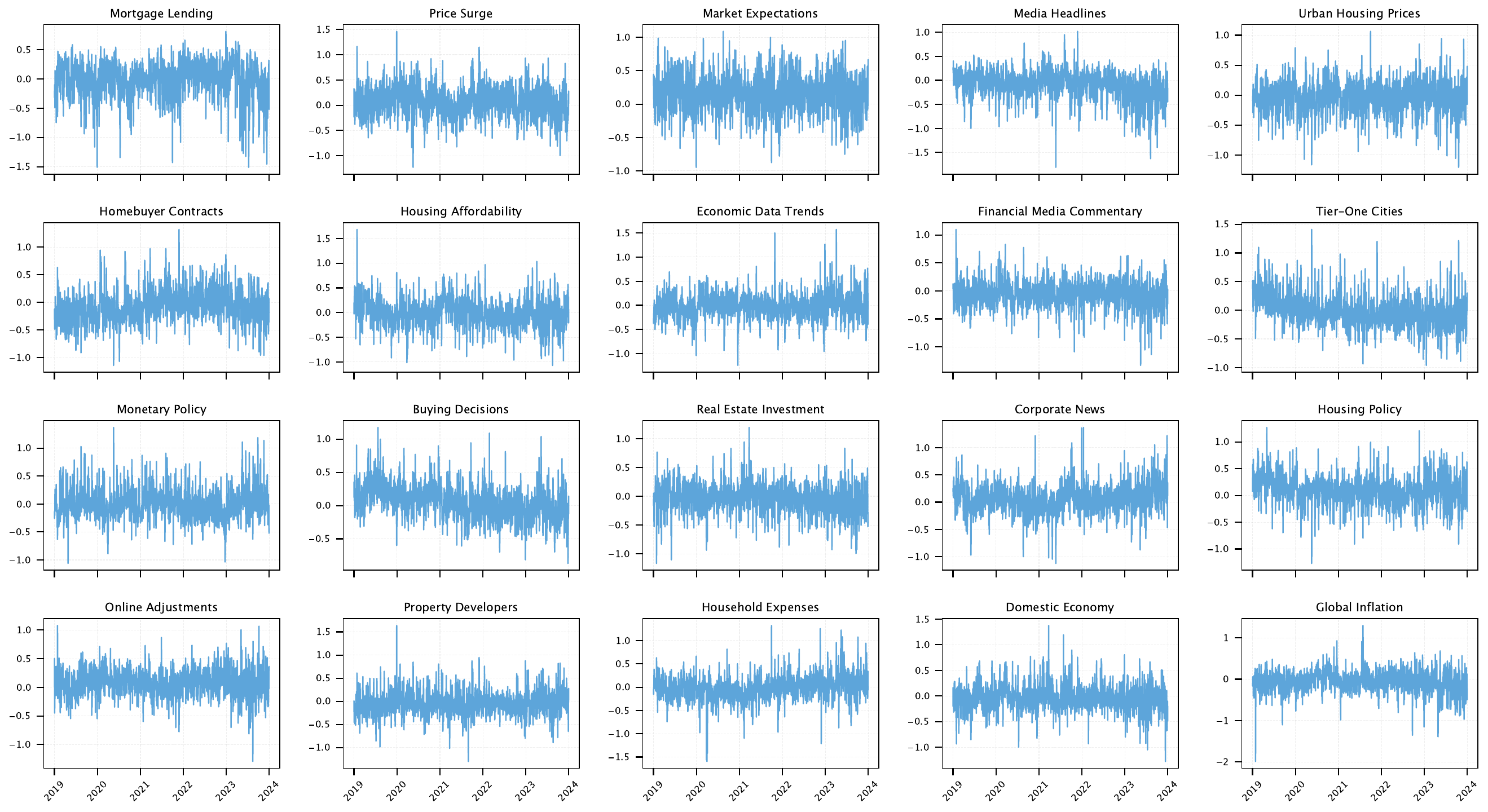}
    \caption{Daily topic probability fluctuations. The 20 LDA topics are estimated from the full 2019---2023 Weibo corpus for Beijing.}
    \label{fig:topic_time_pattern_bj}
\end{figure}

\begin{figure}[!ht]
    \centering
    \includegraphics[width=1\linewidth]{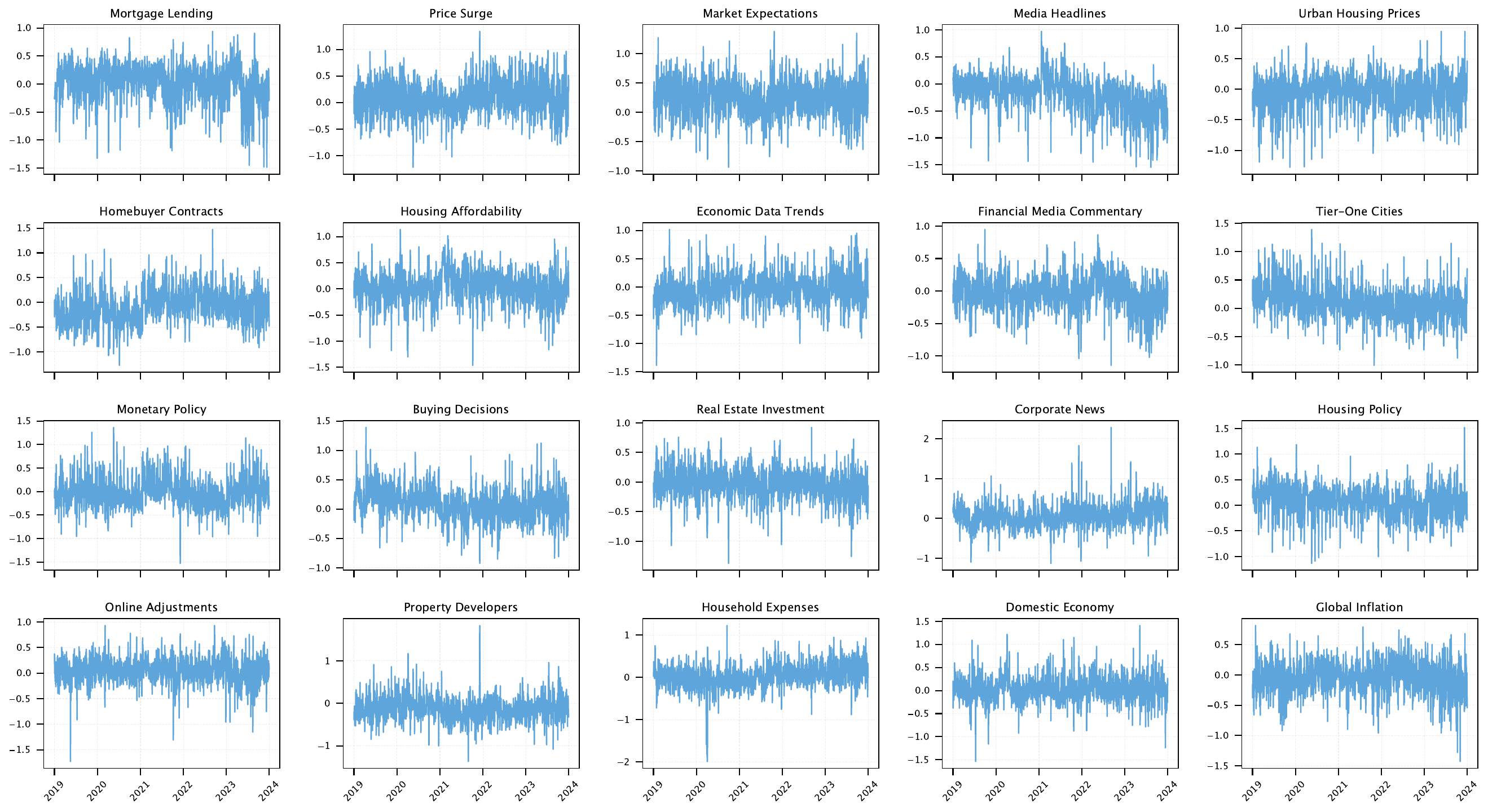}
    \caption{Daily topic probability fluctuations. The 20 LDA topics are estimated from the full 2019---2023 Weibo corpus for Shanghai.}
    \label{fig:topic_time_pattern_sh}
\end{figure}

\section{Identifiability of LLM-powered Deep Panel}
\label{app:identifiability}

We study identifiability for model \eqref{eq:deep-panel-model} in Subsection \ref{sec:deeppanel}
when the shared feature extractor is a ReLU network. As is well known, ReLU networks are not
identifiable at the parameter level because of intrinsic symmetries, most notably permutations
of hidden neurons and positive rescalings. To formalize this non-identifiability, we adopt the
notion of equivalence between parameterizations introduced in Definition~3 of
\cite{bona2023parameter}, which we restate here for completeness.

\begin{definition}[Equivalence between parameter sets]
\label{def:equiv-params}
Consider an $L$-layer feedforward ReLU network
$$
Q^{\mathrm{DNN}}_{\theta}(\bx)
=
\bW^{(L)}\sigma\!\left(
\bW^{(L-1)}\sigma\!\left(
\cdots \sigma\!\left(\bW^{(1)}\bx+\bgamma^{(1)}\right)\cdots
\right)+\bgamma^{(L-1)}
\right)+\bgamma^{(L)},
$$
with parameter vector
$$
\theta=\{(\bW^{(l)},\bgamma^{(l)}):l=1,\ldots,L\}.
$$

Two parameter sets $\theta$ and $\tilde\theta$ are said to be \emph{equivalent}, written
$\theta\sim\tilde\theta$, if they induce the same network function and if $\tilde\theta$ can be
obtained from $\theta$ through a finite composition of the following ReLU symmetries:
\begin{enumerate}
    \item \textbf{Permutation:} for each hidden layer $l=1,\ldots,L-1$, there exists a
    permutation matrix $\bP^{(l)}$ such that
    $$
    \widetilde{\bW}^{(l)}=\bP^{(l)}\bW^{(l)},\qquad
    \widetilde{\bgamma}^{(l)}=\bP^{(l)}\bgamma^{(l)},\qquad
    \widetilde{\bW}^{(l+1)}=\bW^{(l+1)}(\bP^{(l)})^{-1}.
    $$

    \item \textbf{Positive scaling:} for each hidden neuron $j$ in layer $l$, there exists
    $c_{l,j}>0$ such that
    $$
    \widetilde{\bW}^{(l)}(j,:)=c_{l,j}\bW^{(l)}(j,:),\qquad
    \widetilde{\bgamma}^{(l)}_j=c_{l,j}\bgamma^{(l)}_j,
    $$
    and
    $$
    \widetilde{\bW}^{(l+1)}(:,j)=c_{l,j}^{-1}\bW^{(l+1)}(:,j).
    $$
\end{enumerate}
The collection of all such transformations forms the equivalence group $\mathcal G$, and we write
$\theta\sim\tilde\theta$ if and only if there exists $g\in\mathcal G$ such that
$\tilde\theta=g\cdot\theta$.
\end{definition}

By Proposition~4 and Theorem~7 of \cite{bona2023parameter}, under suitable certificate conditions,
equality of two ReLU network functions on a nonempty open set implies equivalence up to the
symmetry group above. In our Deep Panel model,
\begin{equation}
\label{eq:deep-panel-ident}
Q_i^{\mathrm{DNN}}(\bx)
=
\bbeta_i^\top h_\theta(\bx)+b_i,
\qquad i=1,\ldots,N,
\end{equation}
where $h_\theta(\bx)\in\mathbb R^{d_h}$ is a shared ReLU feature map and the unit-specific output
parameters satisfy a latent group structure: there exist distinct group centers
$\{(\boldsymbol\eta_k,\varphi_k)\}_{k=1}^{K_0}$ and a mapping
$g:\{1,\ldots,N\}\to\{1,\ldots,K_0\}$ such that
\begin{equation}
\label{eq:group-structure-ident-short}
\bbeta_i=\boldsymbol\eta_{g(i)},
\qquad
b_i=\varphi_{g(i)},
\qquad i=1,\ldots,N.
\end{equation}
To separate the slope and intercept uniquely from the realized regression function, we further
assume that the shared feature map is affinely nondegenerate on some nonempty open set
$\mathcal U$, namely,
$$
\Span\!\Bigl\{
\begin{pmatrix}
h_\theta(\bx)\\
1
\end{pmatrix}
:\bx\in\mathcal U
\Bigr\}
=
\mathbb R^{d_h+1}.
$$

The next lemma shows that although the shared ReLU representation is identifiable only up to the
usual symmetry group, the induced partition of units remains identifiable up to relabeling.

\begin{lemma}[Identifiability of latent groups]
\label{lem:identifiability-groups}
Assume the certificate conditions in Proposition~4 and Theorem~7 of \cite{bona2023parameter},
the affine nondegeneracy condition above, and the distinctness of the group centers in
\eqref{eq:group-structure-ident-short}. Suppose another parameterization
$(\tilde\theta,\{\tilde\bbeta_i,\tilde b_i\}_{i=1}^N)$ satisfies
\begin{equation}
\label{eq:equal-unit-functions-short}
\tilde\bbeta_i^\top h_{\tilde\theta}(\bx)+\tilde b_i
=
\bbeta_i^\top h_\theta(\bx)+b_i,
\qquad
\forall \bx\in\mathcal U,\quad i=1,\ldots,N,
\end{equation}
and that $\theta\sim\tilde\theta$. Then there exist a permutation matrix $\bP$ and a positive
diagonal matrix $\bD$ such that
\begin{equation}
\label{eq:beta-transform-short}
\tilde\bbeta_i=\bD^{-1}\bP\bbeta_i,
\qquad
\tilde b_i=b_i,
\qquad i=1,\ldots,N.
\end{equation}
Consequently, the latent group assignment $\{g(i)\}_{i=1}^N$ is identifiable up to relabeling of
the group indices.
\end{lemma}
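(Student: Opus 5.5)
Since $\theta\sim\tilde\theta$, there is a symmetry $g\in\mathcal G$ with $\tilde\theta=g\cdot\theta$. I will first show that this forces the two shared feature maps to differ by a fixed permutation and positive rescaling. The composition of hidden-layer permutations and positive scalings acts on all intermediate layers, but the last hidden representation $h_\theta(\bx)\in\R^{d_h}$ is reached only through the permutation $\bP$ and scalings $c_{j}>0$ applied to its neurons.

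By positive homogeneity of ReLU, $\sigma(c u)=c\sigma(u)$ for $c>0$. Tracking the symmetries layer by layer shows that the compensating inverse factors from earlier layers cancel inside the network. Hence
\[
h_{\tilde\theta}(\bx)=\bD\bP\, h_\theta(\bx)\qquad\text{for all }\bx\in\mathcal U,
\]
where $\bD=\mathrm{diag}(c_1,\ldots,c_{d_h})$ is positive diagonal. I will state this as an intermediate claim, proved by induction over the layers with the inductive statement ``layer-$l$ activations of $\tilde\theta$ equal $\bD^{(l)}\bP^{(l)}$ times those of $\theta$.'' The only point needing care is the convention that $h$ includes the final activation, so the last-layer scaling is not absorbed by an output weight matrix.

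Next I substitute into \eqref{eq:equal-unit-functions-short}. This gives
\[
\bigl(\bP^\top\bD\tilde\bbeta_i-\bbeta_i\bigr)^\top h_\theta(\bx)+(\tilde b_i-b_i)=0,\qquad \forall\bx\in\mathcal U.
\]
This says the vector $\bigl((\bP^\top\bD\tilde\bbeta_i-\bbeta_i)^\top,\ \tilde b_i-b_i\bigr)$ is orthogonal to every vector $(h_\theta(\bx)^\top,1)^\top$ with $\bx\in\mathcal U$. By the affine nondegeneracy assumption these vectors span $\R^{d_h+1}$, so the vector is zero. Therefore $\tilde b_i=b_i$ and $\bD\tilde\bbeta_i=\bP\bbeta_i$, that is, $\tilde\bbeta_i=\bD^{-1}\bP\bbeta_i$, which is \eqref{eq:beta-transform-short}.

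For group identifiability, the map $T:(\bbeta,b)\mapsto(\bD^{-1}\bP\bbeta,b)$ is a bijection on $\R^{d_h+1}$ and is the same for all units. Hence $(\tilde\bbeta_i,\tilde b_i)=(\tilde\bbeta_j,\tilde b_j)$ if and only if $(\bbeta_i,b_i)=(\bbeta_j,b_j)$. Because the centers are distinct, the latter holds if and only if $g(i)=g(j)$. So the partition of $\{1,\ldots,N\}$ induced by the tilde heads coincides with the true one. The tilde centers are $T(\boldsymbol\eta_k,\varphi_k)$, which are again distinct, and any labeling of them differs from $g$ only by a relabeling of group indices.

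The main obstacle is the first step: showing rigorously that a general element of $\mathcal G$ acts on the output feature map $h$ only as $\bD\bP$. This requires care about whether $\bD$ and $\bP$ commute in the stated form. A permuted diagonal matrix is still diagonal up to reordering, so I will write the transformation as $\bD\bP$ with $\bD$ redefined accordingly. If the paper's $h$ excluded the last activation, the final-layer weights would be transformed as well and I would absorb them the same way.
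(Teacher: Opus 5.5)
Your proposal is correct and follows the paper's proof essentially step for step. You first establish $h_{\tilde\theta}(\bx)=\bD\bP\,h_\theta(\bx)$ on $\mathcal U$, then substitute into the unit-wise identity and use affine nondegeneracy to get $\bP^\top\bD\tilde\bbeta_i=\bbeta_i$ and $\tilde b_i=b_i$, and finally use that the common invertible map $\bD^{-1}\bP$ preserves distinctness of centers and hence the partition. The one addition is your layer-by-layer induction via positive homogeneity of ReLU, which justifies the first step that the paper only asserts.
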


The lemma shows that parameter-level non-identifiability of the shared ReLU feature extractor does
not destroy identifiability of the latent grouping. The reason is that all units inherit the same
permutation-rescaling ambiguity from the shared representation, and this common invertible
transformation preserves equality relations among the group-specific output parameters.

\begin{proof}
Since $\theta\sim\tilde\theta$, the equivalence transformation acts on the final hidden-layer
coordinates by a common permutation and positive rescaling. Hence there exist a permutation matrix
$\bP$ and a positive diagonal matrix $\bD$ such that
\begin{equation}
\label{eq:hidden-map-transform-short}
h_{\tilde\theta}(\bx)=\bD\bP\,h_\theta(\bx),
\qquad \forall \bx\in\mathcal U.
\end{equation}
Fix any $i\in\{1,\ldots,N\}$. Combining \eqref{eq:equal-unit-functions-short} and
\eqref{eq:hidden-map-transform-short}, we obtain
$$
\bbeta_i^\top h_\theta(\bx)+b_i
=
\tilde\bbeta_i^\top \bD\bP\,h_\theta(\bx)+\tilde b_i,
\qquad \forall \bx\in\mathcal U.
$$
Equivalently,
$$
\Bigl(\bbeta_i-\bP^\top\bD\,\tilde\bbeta_i\Bigr)^\top h_\theta(\bx)
+
(b_i-\tilde b_i)
=0,
\qquad \forall \bx\in\mathcal U.
$$
Thus
$$
\begin{pmatrix}
\bbeta_i-\bP^\top\bD\,\tilde\bbeta_i\\
b_i-\tilde b_i
\end{pmatrix}^{\!\top}
\begin{pmatrix}
h_\theta(\bx)\\
1
\end{pmatrix}
=0,
\qquad \forall \bx\in\mathcal U.
$$
By affine nondegeneracy, the vectors
$\bigl(h_\theta(\bx)^\top,1\bigr)^\top$ span $\mathbb R^{d_h+1}$ as $\bx$ varies over
$\mathcal U$, so the only vector orthogonal to all of them is the zero vector. Therefore,
$$
\bbeta_i=\bP^\top\bD\,\tilde\bbeta_i
\qquad\text{and}\qquad
b_i=\tilde b_i.
$$
Multiplying the first identity by $\bP$ and then by $\bD^{-1}$ yields
$$
\tilde\bbeta_i=\bD^{-1}\bP\bbeta_i,
\qquad
\tilde b_i=b_i.
$$
This proves \eqref{eq:beta-transform-short}.

Now define transformed group centers by
$$
\tilde{\boldsymbol\eta}_k:=\bD^{-1}\bP\,\boldsymbol\eta_k,
\qquad
\tilde\varphi_k:=\varphi_k,
\qquad k=1,\ldots,K_0.
$$
Then, by \eqref{eq:group-structure-ident-short},
$$
\tilde\bbeta_i=\tilde{\boldsymbol\eta}_{g(i)},
\qquad
\tilde b_i=\tilde\varphi_{g(i)},
\qquad i=1,\ldots,N.
$$
Because $\bD^{-1}\bP$ is invertible and the original group centers are distinct, the transformed
group centers are also distinct. Hence, for any $i,j$,
$$
g(i)=g(j)
\quad\Longleftrightarrow\quad
(\bbeta_i,b_i)=(\bbeta_j,b_j)
\quad\Longleftrightarrow\quad
(\tilde\bbeta_i,\tilde b_i)=(\tilde\bbeta_j,\tilde b_j).
$$
Therefore the induced partition of $\{1,\ldots,N\}$ is unchanged, except possibly for a relabeling
of the group indices.
\end{proof}

\section{Optimization of the Homogeneity-pursuit Loss}\label{app:optimization}
The objective in \eqref{eq:deep-panel-model} of Subsection \ref{sec:deeppanel} with the C-LASSO-type penalty is nonconvex for two reasons:  
(i) the DNN feature map $h_{i,t}=h(\cdot;\bW,\bgamma)$ is nonlinear in $(\bW,\bgamma)$, and  
(ii) the product-form grouping penalty
$$
\frac{\lambda}{N}
\sum_{i=1}^N
\prod_{k=1}^{K_0}
\left(\big\| (\boldsymbol\eta_k^\top,\varphi_k)^\top - (\bbeta_i^\top,b_i)^\top \big\|_2\right)
$$
is a nonconvex, nonsmooth function of the unit-specific parameters and the group centers.  

We define the full set of trainable parameters as:
\[
\Theta := \Big( \underbrace{\bW, \bgamma}_{\Theta_1}, \underbrace{\{\bbeta_i, b_i\}_{i=1}^N}_{\Theta_2}, \underbrace{\{\boldsymbol\eta_k, \varphi_k\}_{k=1}^{K_0}}_{\Theta_3} \Big).
\]
Given the target variable $y_{i,t}$ (which may represent the original outcome or surrogate residuals from a prior stage), the theoretical target-side loss is:
\[
\mathcal{L}(\Theta)
=
\frac{1}{NT}
\sum_{i=1}^N \sum_{t=1}^T
\bigl(
    y_{i,t} - \bbeta_i^\top h_{i,t}(\Theta_1) - b_i
\bigr)^2
+
\frac{\lambda}{N}
\sum_{i=1}^N
\prod_{k=1}^{K_0}
\big\| (\boldsymbol\eta_k^\top,\varphi_k)^\top - (\bbeta_i^\top,b_i)^\top \big\|_2.
\]

\textbf{Joint End-to-End Optimization.}
While the objective can in principle be optimized via blockwise coordinate descent (alternating between updating the group centers $\Theta_3$ via clustering routines and the remaining parameters via gradient steps), such an approach can be cumbersome within modern deep learning frameworks. Instead, we treat all components of $\Theta$ as trainable parameters and update them jointly end-to-end using mini-batch stochastic gradient descent (SGD) or Adam.

\textbf{Empirical Penalty Substitution.}
Directly minimizing the product-form penalty via backpropagation often leads to numerical instability, such as vanishing or exploding gradients. Therefore, in our implementation, we approximate the product penalty with the minimum Euclidean distance across group centers for each unit. This yields a numerically stable, clustering-like empirical loss. Concretely, for a mini-batch $\mathcal{B} \subset \{(i,t)\}$, we compute:
$$
\mathcal{L}_{\mathcal{B}}(\Theta)
=
\frac{1}{|\mathcal{B}|}
\sum_{(i,t)\in\mathcal{B}}
\bigl(
y_{i,t} - \bbeta_i^\top h_{i,t}(\Theta_1) - b_i
\bigr)^2
+
\frac{\lambda}{N}
\sum_{i=1}^N
\min_{1 \le k \le K_0}
\big\| (\boldsymbol\eta_k^\top,\varphi_k)^\top - (\bbeta_i^\top,b_i)^\top \big\|_2.
$$

At each iteration, we compute the stochastic gradient $\nabla_{\Theta} \mathcal{L}_{\mathcal{B}}(\Theta)$ via automatic differentiation and perform a standard gradient step. By backpropagating through the $\min$ operator, the network effectively identifies the closest current group center for each unit and simultaneously pulls the unit-specific coefficients and the center toward each other. 

Overall, the optimization of the homogeneity-pursuit loss reduces to a standard DNN training procedure combining continuous weight updates with implicit, low-dimensional clustering of the group centers. In our experiments, this joint optimization method converges reliably and scales well to the panel sizes considered in LDPM.

\section{Theoretical Guarantee of Conformal Inference}\label{sec:theoryconformal}
In this section, we discuss the theoretical guarantee of LCPM conformal prediction interval proposed in Subsection \ref{subsec:conformal-PI}, which includes size control and length of prediction interval.

\subsection{Size Control}
In this subsection, we discuss the size of LCPM conformal prediction interval. Throughout this section, we fix a latent group $k$ and a test point
$(i^\ast,T+1)$ with $\widehat g(i^\ast)=k$.
Define the fitted prediction score
$$
s_{i,t}:=\big|y_{i,t}-\widehat y_{i,t}\big|,
\qquad
(i,t)\in\mathcal D_{\mathrm{Cal}}\cup\{(i^\ast,T+1)\},
$$
and let $\widetilde F_k$ denote the empirical CDF of the calibration scores
$\{s_{i,t}:(i,t)\in\mathcal D_{\mathrm{Cal}},\ \widehat g(i)=k\}$.
Let $F_k$ denote the conditional CDF of the test score $s_{i^\ast,T+1}$
given $\widehat g(i^\ast)=k$.

To relate the fitted-score distribution to the underlying data-generating
process, we introduce an oracle score process.
Let $Q_i(\cdot)$ denote the population regression function and define the oracle
residual and score
$$
e_{i,t}:=y_{i,t}-Q_i(\cdot),
\qquad
r_{i,t}:=|e_{i,t}|.
$$
Let $\widetilde G_k$ be the empirical CDF of
$\{r_{i,t}:(i,t)\in\mathcal D_{\mathrm{Cal}},\ \widehat g(i)=k\}$, and let $G_k$
be the conditional CDF of the test oracle score $r_{i^\ast,T+1}$ given
$\widehat g(i^\ast)=k$.

\begin{assumption}[DNN prediction consistency]
\label{ass:dnn-consistency}
There exist sequences $\delta_{T,k}\ge 0$ and
$\rho^{\mathrm{pred}}_{T,k}\in[0,1]$ with $\delta_{T,k}\to 0$ and
$\rho^{\mathrm{pred}}_{T,k}\to 0$, and an event $\mathcal B_{T,k}$ with
$\Pr(\mathcal B_{T,k})\ge 1-\rho^{\mathrm{pred}}_{T,k}$ such that on
$\mathcal B_{T,k}$,
$$
\sup_{\substack{(i,t)\in\mathcal D_{\mathrm{Cal}}\\ \widehat g(i)=k}}
\big|\widehat y_{i,t}-Q_i(\cdot)\big|
\le
\delta_{T,k},
\qquad
\big|\widehat y_{i^\ast,T+1}-Q_{i^\ast}(\cdot)\big|
\le
\delta_{T,k}.
$$
\end{assumption}

Assumption~\ref{ass:dnn-consistency} requires uniform prediction consistency of
the deep learning predictor within each estimated group, both on the calibration
sample and at the test point. Such conditions are standard in conformal
inference with learned predictors and post-selection inference
\citep{xu2023conformal}.

\begin{assumption}[Uniform CDF approximation of residuals]
\label{ass:mixing}
There exist sequences $\varepsilon^{\mathrm{mix}}_{T,k}\ge 0$ and
$\rho^{\mathrm{mix}}_{T,k}\in[0,1]$ with $\varepsilon^{\mathrm{mix}}_{T,k}\to 0$ and
$\rho^{\mathrm{mix}}_{T,k}\to 0$, and an event $\mathcal M_{T,k}$ with
$\Pr(\mathcal M_{T,k})\ge 1-\rho^{\mathrm{mix}}_{T,k}$ such that on
$\mathcal M_{T,k}$,
$$
\sup_{x\in\mathbb R}\big|\widetilde G_k(x)-G_k(x)\big|
\le
\varepsilon^{\mathrm{mix}}_{T,k}.
$$
\end{assumption}
A sufficient condition is that the within-group oracle residuals form a strictly
stationary $\beta$-mixing sequence with summable mixing coefficients, in which
case uniform convergence of empirical CDFs follows from classical
empirical-process results for dependent data
\citep{yu1994rates,bradley2005basic}. Importantly, this assumption is imposed
only on oracle residuals, rather than on observed outcomes or covariates,
allowing for substantial temporal dependence and nonstationarity in the original
data-generating process.

\begin{assumption}[Lipschitz]
\label{ass:lipschitz}
There exists $L_k<\infty$ such that for all $x\in\mathbb R$ and $u\ge 0$,
$$
G_k(x+u)-G_k(x-u)\le 2L_k u.
$$
\end{assumption}
Assumption~\ref{ass:lipschitz} is a mild regularity condition on the oracle
residual distribution, ruling out pathological concentration near a single
point. It holds, for example, when the oracle residual admits a bounded
conditional density, and is commonly used to translate prediction error bounds
into uniform CDF approximation errors. Define
$$
m_k := |\{(i,t) \in \mathcal{D}_{\mathrm{Cal}}: \hat{g}(i)=k\}|.
$$

\begin{theorem}[Conditional coverage]
\label{prop:cdf-uniform-from-primitive}
Under Assumptions~\ref{ass:dnn-consistency}--\ref{ass:lipschitz},
$$
\bigg|
\Pr\left(
y_{i^\ast,T+1}\in \mathcal C_{i^\ast,T+1}(1-\alpha)
\ \middle|\ \widehat g(i^\ast)=k
\right)
-(1-\alpha)
\bigg|
\le
3\varepsilon^{\mathrm{mix}}_{T,k}
+
4L_k\delta_{T,k}
+
\frac{1}{m_k},
$$
with probability at least
$1-\rho^{\mathrm{pred}}_{T,k}-\rho^{\mathrm{mix}}_{T,k}$.
\end{theorem}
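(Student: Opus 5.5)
The plan is a standard two-step comparison: fitted scores to oracle scores via Assumption~\ref{ass:dnn-consistency}, then empirical to population oracle CDF via Assumption~\ref{ass:mixing}. Throughout we work on the event $\mathcal B_{T,k}\cap\mathcal M_{T,k}$, which by a union bound has probability at least $1-\rho^{\mathrm{pred}}_{T,k}-\rho^{\mathrm{mix}}_{T,k}$. The coverage event is $\{s_{i^\ast,T+1}\le q_k\}$, so its conditional probability is $F_k(q_k)$ (the quantile $q_k$ being determined by calibration data). The task therefore reduces to showing $|F_k(q_k)-(1-\alpha)|$ is small.

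First, on $\mathcal B_{T,k}$ the reverse triangle inequality gives
\[
|s_{i,t}-r_{i,t}|\le|\widehat y_{i,t}-Q_i(\cdot)|\le\delta_{T,k}
\]
for every calibration point in group $k$ and for the test point. Hence the empirical CDFs are sandwiched as
\[
\widetilde G_k(x-\delta_{T,k})\le\widetilde F_k(x)\le\widetilde G_k(x+\delta_{T,k}),
\]
and likewise for the test score,
\[
G_k(x-\delta_{T,k})\le F_k(x)\le G_k(x+\delta_{T,k})
\]
(one should be careful here that the test sandwich holds on the event rather than in conditional probability; I would treat the test-point bound as holding when defining $F_k$ on $\mathcal B_{T,k}$). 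Combining the first sandwich with Assumption~\ref{ass:mixing} and Assumption~\ref{ass:lipschitz} yields
\[
\sup_x|\widetilde F_k(x)-G_k(x)|\le \varepsilon^{\mathrm{mix}}_{T,k}+2L_k\delta_{T,k}.
\]
Combining the second sandwich with Assumption~\ref{ass:lipschitz} yields
\[
\sup_x|F_k(x)-G_k(x)|\le 2L_k\delta_{T,k}.
\]
Therefore
\[
\sup_x|\widetilde F_k(x)-F_k(x)|\le \varepsilon^{\mathrm{mix}}_{T,k}+4L_k\delta_{T,k}.
\]

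Second, by the definition of $q_k$ as the $\lceil(m_k+1)(1-\alpha)\rceil$-th order statistic,
\[
1-\alpha\le \widetilde F_k(q_k)\le 1-\alpha+\frac{1}{m_k}+(\text{ties}).
\]
Ties are controlled by the jump of $\widetilde F_k$ at $q_k$. That jump is bounded by the jump of $\widetilde G_k$ over $[q_k-\delta,q_k+\delta]$, which in turn is at most $2\varepsilon^{\mathrm{mix}}_{T,k}+2L_k\delta_{T,k}$ through the approximation to the continuous $G_k$. Inserting $F_k(q_k)=\widetilde F_k(q_k)+(F_k-\widetilde F_k)(q_k)$ then gives the stated bound. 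The constant $3\varepsilon^{\mathrm{mix}}_{T,k}$ arises from one $\varepsilon$ in the uniform comparison plus two in the tie/jump control. The constant $4L_k\delta_{T,k}$ is obtained if the tie term is bounded directly through the $\widetilde G_k$–$G_k$ comparison without double counting $\delta$.

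The main obstacle is bookkeeping the constants, in particular handling the atom of $\widetilde F_k$ at $q_k$ without inflating the $L_k\delta$ coefficient beyond $4$. If the naive route overshoots, I would use the lower inequality $\widetilde F_k(q_k^-)\le 1-\alpha+1/m_k$, compare $F_k(q_k)$ with $G_k(q_k)$, and compare $\widetilde G_k$ at $q_k\pm\delta$ with $G_k$ at $q_k$. Because $q_k$ is data dependent, one must also note that conditioning on calibration data is legitimate, since the bounds are uniform in $x$.
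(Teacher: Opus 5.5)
Your skeleton matches the paper's. You work on $\mathcal B_{T,k}\cap\mathcal M_{T,k}$, sandwich the fitted scores between $\delta_{T,k}$-shifted oracle scores, pass through $\widetilde G_k$ and $G_k$, and finish with the order-statistic property of $q_k$.

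Two of your deviations are improvements.
\begin{itemize}
\item Going from $\widetilde F_k$ to $G_k$ in one step gives $\sup_x|\widetilde F_k(x)-F_k(x)|\le\varepsilon^{\mathrm{mix}}_{T,k}+4L_k\delta_{T,k}$. The paper instead bounds $\sup_x|\widetilde F_k(x)-\widetilde G_k(x)|$ separately by $2L_k\delta_{T,k}+2\varepsilon^{\mathrm{mix}}_{T,k}$ and then adds another $\varepsilon^{\mathrm{mix}}_{T,k}$, which is where its factor $3$ comes from.
\item Your concern about the atom of $\widetilde F_k$ at $q_k$ is well founded. The paper closes with Lemma~\ref{lem:conditional-size}, which asserts $|\widetilde F_k(q_k)-(1-\alpha)|\le 1/m_k$ because $\widetilde F_k$ ``increases in jumps of size at most $1/m_k$''. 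That is false under ties. For $q_k=s_{k,(\lceil (m_k+1)(1-\alpha)\rceil)}$ it can fail even without ties: with $m_k=20$ and $\alpha=0.12$ one gets $q_k=s_{k,(19)}$ and $\widetilde F_k(q_k)=0.95>0.93=1-\alpha+1/m_k$.
\end{itemize}

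That said, your primary bookkeeping does not give the stated constant. On the upper side you add $\varepsilon^{\mathrm{mix}}_{T,k}+4L_k\delta_{T,k}$ from the uniform comparison, $2\varepsilon^{\mathrm{mix}}_{T,k}+2L_k\delta_{T,k}$ for the jump, and $1/m_k$. That totals $6L_k\delta_{T,k}$, and ``without double counting $\delta$'' is not yet an argument.

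It can be rescued. Assumption~\ref{ass:lipschitz}, applied at the midpoint, makes $G_k$ $L_k$-Lipschitz, so each one-sided shift costs only $L_k\delta_{T,k}$. Your uniform bound then drops to $\varepsilon^{\mathrm{mix}}_{T,k}+2L_k\delta_{T,k}$, and the total is exactly $3\varepsilon^{\mathrm{mix}}_{T,k}+4L_k\delta_{T,k}+1/m_k$.

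Your fallback is cleaner and works as stated. Write $\delta=\delta_{T,k}$, $\varepsilon=\varepsilon^{\mathrm{mix}}_{T,k}$, $L=L_k$, $m=m_k$, $q=q_k$ and $j=\lceil (m+1)(1-\alpha)\rceil$. Then $\widetilde F_k(q^-)\le (j-1)/m<1-\alpha+1/m$ and $\widetilde F_k(q)\ge j/m\ge 1-\alpha$, regardless of ties. Using only the two sandwiches, Assumptions~\ref{ass:mixing}--\ref{ass:lipschitz}, and continuity of $G_k$ (to pass the uniform bound to left limits):
\begin{align*}
F_k(q)&\le G_k(q+\delta)\le G_k(q-\delta)+2L\delta\le \widetilde G_k\bigl((q-\delta)^-\bigr)+\varepsilon+2L\delta\\
&\le \widetilde F_k(q^-)+\varepsilon+2L\delta< 1-\alpha+\tfrac{1}{m}+\varepsilon+2L\delta,\\
F_k(q)&\ge G_k(q-\delta)\ge G_k(q+\delta)-2L\delta\ge \widetilde G_k(q+\delta)-\varepsilon-2L\delta\\
&\ge \widetilde F_k(q)-\varepsilon-2L\delta\ge 1-\alpha-\varepsilon-2L\delta.
\end{align*}
Hence $|F_k(q_k)-(1-\alpha)|\le\varepsilon^{\mathrm{mix}}_{T,k}+2L_k\delta_{T,k}+1/m_k$. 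This implies the theorem and holds for either definition of the empirical quantile (the $(m_k+1)$-adjusted order statistic or $\inf\{x:\widetilde F_k(x)\ge 1-\alpha\}$).

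The conditioning issue you flag is glossed over in the paper as well: the test-point sandwich holds only on $\mathcal B_{T,k}$, and $q_k$ is treated as fixed given the calibration data. The paper simply takes conditional probabilities of the pointwise inclusions.
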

Theorem~\ref{prop:cdf-uniform-from-primitive} shows that conditional coverage
errors vanish as long as prediction error and temporal dependence of oracle
scores are sufficiently mild, without requiring exchangeability or stationarity
of the original data-generating process.

\begin{proof}[Proof of Theorem \ref{prop:cdf-uniform-from-primitive}]
\label{app:proof-cdf-uniform}
Let $\mathcal A_{T,k}:=\mathcal B_{T,k}\cap\mathcal M_{T,k}$.
By Assumptions~\ref{ass:dnn-consistency} and~\ref{ass:mixing},
$$
\Pr(\mathcal A_{T,k}) \ge
1-\rho^{\mathrm{pred}}_{T,k}-\rho^{\mathrm{mix}}_{T,k}.
$$
By the triangle inequality, on $\mathcal A_{T,k}$,
\begin{equation}\label{eq:triangle}
\begin{aligned}
\sup_{x\in\mathbb R}|\widetilde F_k(x)-F_k(x)|
\le\ &
\underbrace{\sup_{x}|\widetilde F_k(x)-\widetilde G_k(x)|}_{(I)}
+
\underbrace{\sup_{x}|\widetilde G_k(x)-G_k(x)|}_{\le\varepsilon^{\mathrm{mix}}_{T,k}}
+
\underbrace{\sup_{x}|G_k(x)-F_k(x)|}_{(II)} .
\end{aligned}
\end{equation}
Actually, we can show $(I) \le 2L_k\delta_{T,k}
+2\varepsilon^{\mathrm{mix}}_{T,k}$ in \eqref{eq:term1} and $(II) \le 2L_k\delta_{T,k}$ in \eqref{eq:term2} on $\mathcal A_{T,k}$. This implies
\begin{equation}\label{equ:Jan23:01}
   \sup_{x\in\mathbb R}|\widetilde F_k(x)-F_k(x)|
\le
3\varepsilon^{\mathrm{mix}}_{T,k}
+4L_k\delta_{T,k}.
\end{equation}
Then, applying Lemma \ref{lem:conditional-size}, we obtain the final result. Now, what remains is to bound $(I)$ and $(II)$. We first prove $(II)$; a similar proof argument can also be applied to $(I)$.

\paragraph{Part 1: Bounding $(II)$.}
On $\mathcal B_{T,k}$, write
$$
s_{i,t}
=
|e_{i,t}+u_{i,t}|,
\qquad
u_{i,t}:=Q_i(\cdot)-\widehat y_{i,t},
$$
so that by the reverse triangle inequality,
$$
|s_{i,t}-r_{i,t}|
=
\big||e_{i,t}+u_{i,t}|-|e_{i,t}|\big|
\le
|u_{i,t}|
\le
\delta_{T,k}.
$$
In particular,
$$
|s_{i^\ast,T+1}-r_{i^\ast,T+1}|\le\delta_{T,k}.
$$
Hence, for any $x\in\mathbb R$,
$$
\{r_{i^\ast,T+1}\le x-\delta_{T,k}\}
\subseteq
\{s_{i^\ast,T+1}\le x\}
\subseteq
\{r_{i^\ast,T+1}\le x+\delta_{T,k}\}.
$$
Taking conditional probabilities given $\widehat g(i^\ast)=k$ yields
$$
G_k(x-\delta_{T,k})\le F_k(x)\le G_k(x+\delta_{T,k}),
$$
and therefore
$$
\sup_{x\in\mathbb R}|F_k(x)-G_k(x)|
\le
\sup_x\big\{G_k(x+\delta_{T,k})-G_k(x-\delta_{T,k})\big\}.
$$
By Assumption~\ref{ass:lipschitz},
\begin{equation}\label{eq:term2}
\sup_{x\in\mathbb R}|F_k(x)-G_k(x)|
\le
2L_k\delta_{T,k}.
\end{equation}

\paragraph{Part 2: Bounding $(I)$.}
The same argument applies to calibration scores.
For any $(i,t)\in\mathcal D_{\mathrm{Cal}}$ with $\widehat g(i)=k$,
$$
r_{i,t}\le x-\delta_{T,k}
 \Rightarrow
s_{i,t}\le x,
\qquad
s_{i,t}\le x
 \Rightarrow
r_{i,t}\le x+\delta_{T,k}.
$$
Summing indicators and dividing by $m_k$ gives, for all $x$,
$$
\widetilde G_k(x-\delta_{T,k})
\le
\widetilde F_k(x)
\le
\widetilde G_k(x+\delta_{T,k}).
$$
Consequently,
$$
\sup_{x\in\mathbb R}|\widetilde F_k(x)-\widetilde G_k(x)|
\le
\sup_x\big\{\widetilde G_k(x+\delta_{T,k})-\widetilde G_k(x-\delta_{T,k})\big\}.
$$
Write $\widetilde G_k=G_k+(\widetilde G_k-G_k)$ and apply the triangle inequality:
$$
\widetilde G_k(x+\delta_{T,k})-\widetilde G_k(x-\delta_{T,k})
\le
\big[G_k(x+\delta_{T,k})-G_k(x-\delta_{T,k})\big]
+2\sup_{u\in\mathbb R}|\widetilde G_k(u)-G_k(u)|.
$$
Taking suprema and using Assumptions~\ref{ass:mixing} and~\ref{ass:lipschitz},
\begin{equation}\label{eq:term1}
\sup_{x\in\mathbb R}|\widetilde F_k(x)-\widetilde G_k(x)|
\le
2L_k\delta_{T,k}
+2\varepsilon^{\mathrm{mix}}_{T,k}.
\end{equation}
\end{proof}
\subsection{Conformal Interval Length Comparison}
\label{subsec:power}

We now turn to efficiency considerations and compare the length of conformal
prediction intervals constructed using the proposed joint modeling framework
with those obtained from a target-only model.
Throughout, we fix a group $k$ and a test point $(i^\ast,T+1)$ with
$\widehat g(i^\ast)=k$.

To facilitate comparison of interval lengths, we impose a mild parametric
condition on the oracle residuals.

\begin{assumption}[Gaussian oracle residuals]
\label{ass:gaussian}
Conditional on $\widehat g(i)=k$,
$$
\epsilon_{i,t}\sim\mathcal N(0,\sigma^2_{\epsilon,k}),
\qquad
e_{i,t}\sim\mathcal N(0,\sigma^2_{e,k}).
$$
\end{assumption}
Under the $L_2$ projection $\epsilon_{i,t}=\Gamma(\mathbf\epsilon^S_{i,t})+e_{i,t}$ in
\eqref{eq:proj-gamma}, we have $\sigma^2_{e,k}\le\sigma^2_{\epsilon,k}$ at the
population level. While this assumption can be relaxed to broader classes such
as log-concave distributions \citep{bobkov1999isoperimetric}, we focus on the
Gaussian case for clarity and intuition.

Consider the target-only prediction model~\eqref{eq:model-general}, and let
$\widehat F_i(\cdot)$ denote its fitted predictor. Define the target-only fitted
score
$$
s^{\mathrm{tar}}_{i,t}
:=
\big|y_{i,t}-\widehat F_i(\cdot)\big|,
\qquad
(i,t)\in\mathcal D_{\mathrm{Cal}}\cup\{(i^\ast,T+1)\}.
$$
Let $\widetilde F^{\mathrm{tar}}_k$ be the empirical CDF of
$\{s^{\mathrm{tar}}_{i,t}:(i,t)\in\mathcal D_{\mathrm{Cal}},\widehat g(i)=k\}$, and
define the target-only conformal cutoff
$$
\widehat q^{\mathrm{tar}}_k
:=
\inf\bigl\{x:\widetilde F^{\mathrm{tar}}_k(x)\ge 1-\alpha\bigr\}.
$$
The resulting target-only conformal interval is
$$
\mathcal C^{\mathrm{tar}}_{i^\ast,T+1}(1-\alpha)
=
\bigl[
\widehat F_{i^\ast}(\cdot)-\widehat q^{\mathrm{tar}}_k,
\widehat F_{i^\ast}(\cdot)+\widehat q^{\mathrm{tar}}_k
\bigr].
$$

We assume the target-only predictor is also fitted consistently.

\begin{assumption}[Target-model prediction consistency]
\label{ass:target-consistency}
There exist sequences $\delta^{\mathrm{tar}}_{T,k}\ge 0$ and
$\rho^{\mathrm{tar}}_{T,k}\in[0,1]$ with $\delta^{\mathrm{tar}}_{T,k}\to 0$ and
$\rho^{\mathrm{tar}}_{T,k}\to 0$, and an event $\mathcal B^{\mathrm{tar}}_{T,k}$
with $\Pr(\mathcal B^{\mathrm{tar}}_{T,k})\ge 1-\rho^{\mathrm{tar}}_{T,k}$ such that
on $\mathcal B^{\mathrm{tar}}_{T,k}$,
\[
\sup_{\substack{(i,t)\in\mathcal D_{\mathrm{Cal}}\\ \widehat g(i)=k}}
\big|\widehat F_i(\cdot)-F_i(\cdot)\big|
\le
\delta^{\mathrm{tar}}_{T,k},
\qquad
\big|\widehat F_{i^\ast}(\cdot)-F_{i^\ast}(\cdot)\big|
\le
\delta^{\mathrm{tar}}_{T,k}.
\]
\end{assumption}

Now, we state our main theorem.

\begin{theorem}[Conformal interval length comparison]
\label{thm:interval-length}
Under Assumptions~\ref{ass:dnn-consistency}-\ref{ass:target-consistency}, the conformal cutoffs satisfy
$$
\widehat q^{\mathrm{joint}}_k
-
\widehat q^{\mathrm{tar}}_k
=
(\sigma_{e,k}-\sigma_{\epsilon,k})c_\alpha
+
O_p\left(
\varepsilon^{\mathrm{mix}}_{T,k}
+
\delta_{T,k}
+
\delta^{\mathrm{tar}}_{T,k}
+
\frac{1}{m_k}
\right),
$$
where $c_\alpha$ is the $(1-\alpha)$-quantile of $|Z|$ with
$Z\sim\mathcal N(0,1)$. Consequently, provided $\sigma_{\epsilon,k}>0$,
$$
\frac{
\mathrm{length}\big(\mathcal C^{\mathrm{joint}}_{i^\ast,T+1}(1-\alpha)\big)
}{
\mathrm{length}\big(\mathcal C^{\mathrm{tar}}_{i^\ast,T+1}(1-\alpha)\big)
}
=
\frac{\sigma_{e,k}}{\sigma_{\epsilon,k}}
+
O_p\left(
\varepsilon^{\mathrm{mix}}_{T,k}
+
\delta_{T,k}
+
\delta^{\mathrm{tar}}_{T,k}
+
\frac{1}{m_k}
\right).
$$
\end{theorem}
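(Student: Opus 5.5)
The plan is to show that each conformal cutoff is close to the $(1-\alpha)$-quantile of its own oracle score distribution, and then compute those oracle quantiles under Assumption~\ref{ass:gaussian}.

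\textbf{Step 1: the joint cutoff.} From the proof of Theorem~\ref{prop:cdf-uniform-from-primitive}, on $\mathcal A_{T,k}$ we have
\[
\sup_x|\widetilde F_k(x)-\widetilde G_k(x)|\le 2L_k\delta_{T,k}+2\varepsilon^{\mathrm{mix}}_{T,k}.
\]
Combining this with Assumption~\ref{ass:mixing} gives
\[
\sup_x|\widetilde F_k(x)-G_k(x)|\le 3\varepsilon^{\mathrm{mix}}_{T,k}+2L_k\delta_{T,k}.
\]
Under Assumption~\ref{ass:gaussian}, $G_k$ is the CDF of $|e_{i,t}|$ with $e_{i,t}\sim\mathcal N(0,\sigma^2_{e,k})$. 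So $G_k(x)=2\Phi(x/\sigma_{e,k})-1$, and its $(1-\alpha)$-quantile is exactly $\sigma_{e,k}c_\alpha$.

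\textbf{Step 2: invert CDF closeness into quantile closeness.} This is a standard argument. Suppose $\sup_x|\widetilde F_k-G_k|\le\eta$ and $G_k$ has density bounded below by some $c>0$ near $\sigma_{e,k}c_\alpha$. That density is $2\phi(x/\sigma_{e,k})/\sigma_{e,k}$, which is positive at $x=\sigma_{e,k}c_\alpha$. Then
\[
|\widehat q^{\mathrm{joint}}_k-\sigma_{e,k}c_\alpha|\le(\eta+1/m_k)/c .
\]
The $1/m_k$ term absorbs the discreteness of the order statistic $s_{k,(\lceil(m_k+1)(1-\alpha)\rceil)}$ relative to the $\inf\{x:\widetilde F_k(x)\ge1-\alpha\}$ definition. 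Making this inversion rigorous is the step I expect to be the main obstacle. The Lipschitz condition (Assumption~\ref{ass:lipschitz}) only bounds the density from above, so I need a local lower bound from Gaussianity. I also need the quantile to stay away from $0$, which requires $\sigma_{e,k}>0$. If $\sigma_{e,k}=0$, I would argue directly that $\widehat q^{\mathrm{joint}}_k=O(\delta_{T,k}+\varepsilon^{\mathrm{mix}}_{T,k})$.

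\textbf{Step 3: the target-only cutoff.} I repeat Steps 1--2 with $s^{\mathrm{tar}}_{i,t}$ and the oracle score $|\epsilon_{i,t}|$, using Assumption~\ref{ass:target-consistency} in place of Assumption~\ref{ass:dnn-consistency}. The reverse-triangle sandwich argument gives $|s^{\mathrm{tar}}_{i,t}-|\epsilon_{i,t}||\le\delta^{\mathrm{tar}}_{T,k}$. There is one gap: Assumption~\ref{ass:mixing} concerns the oracle scores $r_{i,t}=|e_{i,t}|$ only, so I will implicitly need its analogue for $|\epsilon_{i,t}|$. I would state this as part of how the assumptions are read, since the target-only model shares the same mixing structure. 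This yields
\[
\widehat q^{\mathrm{tar}}_k=\sigma_{\epsilon,k}c_\alpha+O_p\bigl(\varepsilon^{\mathrm{mix}}_{T,k}+\delta^{\mathrm{tar}}_{T,k}+1/m_k\bigr).
\]
The probability of the good events tends to one because every $\rho$ sequence tends to zero, which turns the bounds into $O_p$ statements.

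\textbf{Step 4: difference and ratio.} Subtracting the two cutoff expansions gives the first display. For the ratio, the interval lengths are $2\widehat q^{\mathrm{joint}}_k$ and $2\widehat q^{\mathrm{tar}}_k$. Since $\sigma_{\epsilon,k}c_\alpha>0$, the denominator is bounded away from zero with probability tending to one. A first-order expansion of $a/b$ around $(\sigma_{e,k}c_\alpha,\sigma_{\epsilon,k}c_\alpha)$ then gives
\[
\frac{\mathrm{length}\big(\mathcal C^{\mathrm{joint}}_{i^\ast,T+1}(1-\alpha)\big)}{\mathrm{length}\big(\mathcal C^{\mathrm{tar}}_{i^\ast,T+1}(1-\alpha)\big)}=\frac{\sigma_{e,k}}{\sigma_{\epsilon,k}}+O_p\Bigl(\varepsilon^{\mathrm{mix}}_{T,k}+\delta_{T,k}+\delta^{\mathrm{tar}}_{T,k}+\frac1{m_k}\Bigr).
\]
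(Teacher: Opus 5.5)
Your proposal is correct and follows essentially the same route as the paper's proof: a uniform bound between the empirical fitted-score CDF and a population CDF, the $1/m_k$ slack from the empirical quantile, and local inversion of the half-normal CDF at $\sigma c_\alpha$ using its positive density, applied to both models, followed by subtraction and the ratio with the target-only cutoff bounded away from zero. The differences are minor and in your favour: you compare $\widetilde F_k$ directly with the oracle half-normal CDF $G_k$, whereas the paper bounds $\sup_x|\widetilde F_k^m(x)-F_k^m(x)|$ and then inverts $F_k^m$ while tacitly treating it as the half-normal; and you make explicit the analogue of Assumption~\ref{ass:mixing} for $|\epsilon_{i,t}|$, which the paper uses without stating.
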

Theorem~\ref{thm:interval-length} provides a finite-calibration comparison of
conformal interval lengths. Even when the number of calibration points $m_k$
within a group is fixed, the joint modeling framework yields shorter prediction
intervals up to an explicit remainder term of order $1/m_k$, which reflects
irreducible calibration uncertainty. If $m_k\to\infty$ and the prediction and
mixing errors vanish, the ratio converges in probability to
$\sigma_{e,k}/\sigma_{\epsilon,k}\le 1$.

\begin{proof}[Proof of Theorem~\ref{thm:interval-length}]
For $m\in\{\mathrm{tar},\mathrm{joint}\}$, let $\widehat q_k^m$ denote the empirical
$(1-\alpha)$ conformal cutoff based on the $m_k$ calibration scores in group $k$.
By definition of the empirical quantile,
$$
1-\alpha
\le
\widetilde F_k^m(\widehat q_k^m)
\le
1-\alpha+\frac{1}{m_k}.
$$

Define
$$
\mathcal B_{T,k}^{\mathrm{joint}}:=\mathcal B_{T,k},
\qquad
\delta_{T,k}^{\mathrm{joint}}:=\delta_{T,k},
\qquad
\delta_{T,k}^{\mathrm{tar}}:=\delta_{T,k}^{\mathrm{tar}}.
$$
Consider the event $\mathcal B_{T,k}^m\cap \mathcal M_{T,k}$. By
Assumptions~\ref{ass:dnn-consistency}, \ref{ass:mixing}, and
\ref{ass:target-consistency}, we may apply the same argument as in
\eqref{equ:Jan23:01} to obtain
$$
\sup_{x\in\mathbb R}
\big|
\widetilde F_k^m(x)-F_k^m(x)
\big|
\le
3\varepsilon^{\mathrm{mix}}_{T,k}
+
4L_k\delta^m_{T,k}.
$$
Evaluating this bound at $x=\widehat q_k^m$ and combining with the empirical
quantile inequality above yields
\begin{equation}\label{equ:January3:02}
\big|
F_k^m(\widehat q_k^m)-(1-\alpha)
\big|
\le
3\varepsilon^{\mathrm{mix}}_{T,k}
+
4L_k\delta^m_{T,k}
+
\frac{1}{m_k}.
\end{equation}

Under Assumption~\ref{ass:gaussian}, the oracle scores $|\epsilon_{i,t}|$ and
$|e_{i,t}|$ follow half-normal distributions with scale parameters
$\sigma_{\epsilon,k}$ and $\sigma_{e,k}$, respectively. Hence their population
$(1-\alpha)$-quantiles are given by
$$
q_k^{\mathrm{tar}}=\sigma_{\epsilon,k}c_\alpha,
\qquad
q_k^{\mathrm{joint}}=\sigma_{e,k}c_\alpha,
$$
where $c_\alpha$ is the $(1-\alpha)$-quantile of $|Z|$ for
$Z\sim\mathcal N(0,1)$.

For each $m\in\{\mathrm{tar},\mathrm{joint}\}$, let $f_k^m$ denote the density of
the corresponding half-normal oracle score distribution. Since
$q_k^m>0$ whenever the scale parameter is positive, we have
$f_k^m(q_k^m)>0$. Therefore $F_k^m$ is continuously invertible in a
neighborhood of $q_k^m$, and the inverse map is locally Lipschitz there. It
follows from \eqref{equ:January3:02} that
$$
\widehat q_k^m-q_k^m
=
O_p\left(
\varepsilon^{\mathrm{mix}}_{T,k}
+
\delta_{T,k}^m
+
\frac{1}{m_k}
\right).
$$
Applying this with $m=\mathrm{joint}$ and $m=\mathrm{tar}$ gives
$$
\widehat q_k^{\mathrm{joint}}-q_k^{\mathrm{joint}}
=
O_p\left(
\varepsilon^{\mathrm{mix}}_{T,k}
+
\delta_{T,k}
+
\frac{1}{m_k}
\right),
$$
and
$$
\widehat q_k^{\mathrm{tar}}-q_k^{\mathrm{tar}}
=
O_p\left(
\varepsilon^{\mathrm{mix}}_{T,k}
+
\delta^{\mathrm{tar}}_{T,k}
+
\frac{1}{m_k}
\right).
$$
Therefore,
$$
\widehat q^{\mathrm{joint}}_k-\widehat q^{\mathrm{tar}}_k
=
(q^{\mathrm{joint}}_k-q^{\mathrm{tar}}_k)
+
O_p\left(
\varepsilon^{\mathrm{mix}}_{T,k}
+
\delta_{T,k}
+
\delta^{\mathrm{tar}}_{T,k}
+
\frac{1}{m_k}
\right),
$$
which yields the stated cutoff comparison.

Finally, since
$$
\mathrm{length}\big(\mathcal C^{\mathrm{joint}}_{i^\ast,T+1}(1-\alpha)\big)
=
2\widehat q_k^{\mathrm{joint}},
\qquad
\mathrm{length}\big(\mathcal C^{\mathrm{tar}}_{i^\ast,T+1}(1-\alpha)\big)
=
2\widehat q_k^{\mathrm{tar}},
$$
we have
$$
\frac{
\mathrm{length}\big(\mathcal C^{\mathrm{joint}}_{i^\ast,T+1}(1-\alpha)\big)
}{
\mathrm{length}\big(\mathcal C^{\mathrm{tar}}_{i^\ast,T+1}(1-\alpha)\big)
}
=
\frac{\widehat q_k^{\mathrm{joint}}}{\widehat q_k^{\mathrm{tar}}}.
$$
Because $\sigma_{\epsilon,k}>0$, we have
$q_k^{\mathrm{tar}}=\sigma_{\epsilon,k}c_\alpha>0$, so the denominator is bounded
away from zero with probability tending to one. Hence the preceding remainder
bounds imply
$$
\frac{\widehat q_k^{\mathrm{joint}}}{\widehat q_k^{\mathrm{tar}}}
=
\frac{q_k^{\mathrm{joint}}}{q_k^{\mathrm{tar}}}
+
O_p\left(
\varepsilon^{\mathrm{mix}}_{T,k}
+
\delta_{T,k}
+
\delta^{\mathrm{tar}}_{T,k}
+
\frac{1}{m_k}
\right)
=
\frac{\sigma_{e,k}}{\sigma_{\epsilon,k}}
+
O_p\left(
\varepsilon^{\mathrm{mix}}_{T,k}
+
\delta_{T,k}
+
\delta^{\mathrm{tar}}_{T,k}
+
\frac{1}{m_k}
\right),
$$
which proves the stated ratio comparison.
\end{proof}

\subsection{A General Size Control Theorem}
In this subsection, we provide a general assumption for the size control of Conformal prediction. The statement is motivated by \cite{xu2023conformal}.
Let $\widetilde F_k$ be the empirical CDF of the calibration scores
$\{s_{i,t}:(i,t)\in\mathcal D_{\mathrm{Cal}},\ \widehat g(i)=k\}$, and let $F_k$ be the
(population) CDF of the test score $s_{i^\ast,T+1}$ conditional on
$\widehat g(i^\ast)=k$. Let $ m_k := \big|\{(i,t)\in\mathcal D_{\mathrm{Cal}}:\widehat g(i)=k\}\big|$ be the sample size in the calibration set in the group $k$. We impose a uniform CDF approximation condition.
\begin{assumption}[Uniform CDF approximation of score]
\label{ass:cdf-uniform}
Assume that for each group $k$ there exist sequences $\varepsilon_{T,k}\ge 0$ and
$\rho_{T,k}\in[0,1]$ with $\varepsilon_{T,k}\to 0$ and $\rho_{T,k}\to 0$, and an event
$\mathcal A_{T,k}$ with $\Pr(\mathcal A_{T,k})\ge 1-\rho_{T,k}$ such that on
$\mathcal A_{T,k}$,
$$
\sup_{x\in\mathbb R}\big|\widetilde F_k(x)-F_k(x)\big|
\le
\varepsilon_{T,k}.
$$
\end{assumption}
This Assumption requires the empirical score CDF converges to population score CDF, which directly postulates that the calibration-score empirical CDF approximates the conditional CDF of the test score

The following lemma gives the asymptotic conditional size control based on this condition. 
\begin{lemma}[Asymptotic conditional size control]
\label{lem:conditional-size}
Under Assumption~\ref{ass:cdf-uniform}, on the event $\mathcal A_{T,k}$,
$$
\bigg|
\Pr\left(
y_{i^\ast,t+h}\in \mathcal C_{i^\ast,t+h}(1-\alpha)
\ \middle|\ \widehat g(i^\ast)=k
\right)
-(1-\alpha)
\bigg|
\le
\varepsilon_{T,k}+\frac{1}{m_k}.
$$
\end{lemma}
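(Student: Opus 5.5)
The plan is to reduce coverage of the interval to a statement about the test score compared with the empirical conformal quantile. The key observation is
$$
y_{i^\ast,t+h}\in \mathcal C_{i^\ast,t+h}(1-\alpha)
\iff
s_{i^\ast,t+h}\le q_k,
\qquad
s_{i^\ast,t+h}:=|y_{i^\ast,t+h}-\widehat y_{i^\ast,t+h}|.
$$
So the conditional coverage probability equals $F_k(q_k)$, where $F_k$ is the CDF of the test score given $\widehat g(i^\ast)=k$. Here I treat the calibration sample, and hence $q_k$, as fixed when taking the conditional probability; that is, I work conditionally on the calibration data and on the event $\mathcal A_{T,k}$. It then suffices to show
$$
|F_k(q_k)-(1-\alpha)|\le \varepsilon_{T,k}+1/m_k.
$$

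Next I use the order-statistic definition $q_k=s_{k,(\lceil (m_k+1)(1-\alpha)\rceil)}$ to sandwich $\widetilde F_k(q_k)$. The empirical CDF at the $j$-th order statistic is at least $j/m_k$, with equality when there are no ties. With $j=\lceil (m_k+1)(1-\alpha)\rceil$ this gives
$$
\widetilde F_k(q_k)\ge (1-\alpha)(m_k+1)/m_k\ge 1-\alpha.
$$
For the upper direction I pass to the left limit, $\widetilde F_k(q_k^-)\le (j-1)/m_k$. Since $j-1<(m_k+1)(1-\alpha)$, this yields $\widetilde F_k(q_k^-)\le 1-\alpha+1/m_k$ up to an $O(1/m_k)$ slack, which I will absorb. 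To keep the constant exactly $1/m_k$, I would use the cutoff $\inf\{x:\widetilde F_k(x)\ge 1-\alpha\}$, as in the proof of Theorem~\ref{thm:interval-length}. The two versions differ by at most one order statistic, which is where the $1/m_k$ term comes from.

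Then I transfer to the population CDF with Assumption~\ref{ass:cdf-uniform}. On $\mathcal A_{T,k}$ we have $F_k(q_k)\ge \widetilde F_k(q_k)-\varepsilon_{T,k}\ge 1-\alpha-\varepsilon_{T,k}$. The uniform bound also holds for left limits, so
$$
F_k(q_k^-)\le \widetilde F_k(q_k^-)+\varepsilon_{T,k}\le 1-\alpha+1/m_k+\varepsilon_{T,k}.
$$
Combining the two bounds gives the claim.

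The main obstacle is the upper bound. Since $F_k(q_k)$ rather than $F_k(q_k^-)$ is the coverage, ties or atoms in the test-score distribution could break it. I would handle this either by noting that the uniform sup bound already controls the jump size of $F_k$ at $q_k$ up to $2\varepsilon_{T,k}$, or, more cleanly, by assuming continuity of $F_k$ (as Assumption~\ref{ass:lipschitz} provides in the application). A second subtlety is the conditioning on the calibration data, which makes $q_k$ data-dependent. This is legitimate because Assumption~\ref{ass:cdf-uniform} is stated pathwise on the event $\mathcal A_{T,k}$, and the test score is treated as independent of the calibration fit given the group label.
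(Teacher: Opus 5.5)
Your lower bound is the same as the paper's, $F_k(q_k)\ge \widetilde F_k(q_k)-\varepsilon_{T,k}\ge 1-\alpha-\varepsilon_{T,k}$. Your upper bound takes a genuinely different route, and it is sound once you adopt continuity of $F_k$ at $q_k$.

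The paper never uses left limits of $F_k$. It takes the left-quantile characterization $\widetilde F_k(q_k^-)<1-\alpha\le \widetilde F_k(q_k)$ and the claim that $\widetilde F_k$ jumps by at most $1/m_k$, giving $0\le \widetilde F_k(q_k)-(1-\alpha)\le 1/m_k$. It then applies the sup bound once, at $q_k$. This needs no regularity of $F_k$, but it has two hidden costs. It assumes no ties among calibration scores at $q_k$. It also only works for the cutoff $\inf\{x:\widetilde F_k(x)\ge 1-\alpha\}$: for the main-text cutoff $s_{k,(\lceil (m_k+1)(1-\alpha)\rceil)}$, the same jump argument gives only $(2-\alpha)/m_k$.

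Your route bounds $\widetilde F_k(q_k^-)\le (j-1)/m_k$, transfers through left limits, and uses $F_k(q_k)=F_k(q_k^-)$. It spends continuity of $F_k$ at $q_k$, and in return it tolerates ties and handles the main-text cutoff directly. Since $j-1<(m_k+1)(1-\alpha)$, you get $F_k(q_k)\le 1-\alpha+(1-\alpha)/m_k+\varepsilon_{T,k}$. So there is no $O(1/m_k)$ slack to absorb and no need to switch cutoffs.

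Some extra condition of this kind is unavoidable. If every score sits at a common atom, then $\varepsilon_{T,k}=0$ while coverage equals $1$, which violates the stated bound whenever $\alpha>1/m_k$. The paper's condition is buried in its jump-size claim; yours is explicit.

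Two of your side remarks need correcting.

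\emph{The jump of $F_k$.} The sup bound does not control the jump of $F_k$ at $q_k$ by $2\varepsilon_{T,k}$. It gives $F_k(q_k)-F_k(q_k^-)\le \widetilde F_k(q_k)-\widetilde F_k(q_k^-)+2\varepsilon_{T,k}$, and the first difference is at least $1/m_k$ because $q_k$ is itself a calibration score. Followed through, that option yields only $F_k(q_k)\le \widetilde F_k(q_k)+3\varepsilon_{T,k}$. This is weaker than the paper's direct $F_k(q_k)\le \widetilde F_k(q_k)+\varepsilon_{T,k}$, and it still needs the no-ties jump bound on $\widetilde F_k$. So only your continuity option is a real alternative.

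\emph{Where continuity comes from.} Assumption~\ref{ass:lipschitz} concerns the oracle CDF $G_k$, not $F_k$. Through $G_k(x-\delta_{T,k})\le F_k(x)\le G_k(x+\delta_{T,k})$, it only bounds the jumps of $F_k$ by $2L_k\delta_{T,k}$. In the application, your route therefore picks up an extra $2L_k\delta_{T,k}$ unless $F_k$ is continuous outright.
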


\begin{proof}[Proof of Lemma \ref{lem:conditional-size}]
The proof of lemma is similar to Theorem 1 of \cite{xu2023conformal}, we include the proof here for completeness.
Fix a group $k$ and a test point $(i^\ast,t+h)$ with $\widehat g(i^\ast)=k$.  Let
$$
S^\ast:=s_{i^\ast,t+h}
=
\big|y_{i^\ast,t+h}-\widehat y_{i^\ast,t+h}\big|
$$
denote the test score, and let $F_k$ be its conditional CDF given
$\widehat g(i^\ast)=k$. By construction of the interval,
$$
\bigl\{y_{i^\ast,t+h}\in \mathcal C_{i^\ast,t+h}(1-\alpha)\bigr\}
\iff
\bigl\{|y_{i^\ast,t+h}-\widehat y_{i^\ast,t+h}|\le q_k\bigr\}
\iff
\{S^\ast\le q_k\}.
$$
Therefore,
$$
\Pr\left(
y_{i^\ast,t+h}\in \mathcal C_{i^\ast,t+h}(1-\alpha)
\ \middle|\ \widehat g(i^\ast)=k
\right)
=
F_k(q_k).
$$

By definition of the left empirical quantile,
$$
\widetilde F_k(q_k^-)
<
1-\alpha
\le
\widetilde F_k(q_k),
$$
where $\widetilde F_k(q_k^-)=\lim_{x\uparrow q_k}\widetilde F_k(x)$ is the left limit of $\tilde{F}_k$ at $q_k$.
Since $\widetilde F_k$ increases in jumps of size at most $1/m_k$. Therefore
$$
\widetilde F_k(q_k) - (1-\alpha) \le \widetilde F_k(q_k) - \widetilde F_k(q_k^{-}) \le \frac{1}{m_k}
$$
Then,
$$
0
\le
\widetilde F_k(q_k)-(1-\alpha)
\le
\frac{1}{m_k}.
$$
Hence,
\begin{equation}\label{equ:dec17:01}
\big|\widetilde F_k(q_k)-(1-\alpha)\big|
\le
\frac{1}{m_k}.
\end{equation}

On the event $\mathcal A_{T,k}$, Assumption~\ref{ass:cdf-uniform} implies
\begin{equation}\label{equ:dec17:02}
  \big|F_k(q_k)-\widetilde F_k(q_k)\big|
\le
\varepsilon_{T,k}.  
\end{equation}
Combining \eqref{equ:dec17:01} and \eqref{equ:dec17:02} yields, on $\mathcal A_{T,k}$,
$$
\big|F_k(q_k)-(1-\alpha)\big|
\le
\varepsilon_{T,k}+\frac{1}{m_k}.
$$
\end{proof}


\section{Shortcut Learning}
\label{app:shortcut}

In this section, we formalize the shortcut learning phenomenon
\citep{geirhos2020shortcut} that may arise when a surrogate response is directly
incorporated as a predictor for the target outcome. This phenomenon provides a
key motivation for residual modeling, rather than directly regressing the
target on the surrogate output.

To isolate the main intuition, we study a simplified setting. For clarity, we
suppress the cross-sectional index $i$ and the text index $k$. Suppose the
target outcome satisfies
$$
y = F(\bx,\bz) + \epsilon,
$$
while the surrogate response is generated by
$$
y^S = G(\bx) + \epsilon^S.
$$
Thus, the surrogate captures a component driven only by $\bx$, whereas the
target depends on the richer signal $(\bx,\bz)$.

We first consider the \emph{direct modeling} strategy, in which the surrogate
response is inserted directly into the predictor:
$$
\widehat y
=
f_\theta(\bx,\bz) + ay^S,
$$
where $f_\theta$ is a deep network indexed by parameters $\theta$, and $a$ is a
scalar controlling the surrogate channel. The empirical squared-error loss is
$$
L(\theta,a)
=
\frac{1}{2T}\sum_{t=1}^T
\Bigl(y_t-f_\theta(\bx_t,\bz_t)-ay_t^S\Bigr)^2.
$$
Let
$$
r_t(\theta,a)
:=
y_t-f_\theta(\bx_t,\bz_t)-ay_t^S
$$
denote the residual. Then the gradients of the loss are
$$
\frac{\partial L}{\partial a}(\theta,a)
=
-\frac{1}{T}\sum_{t=1}^T r_t(\theta,a)y_t^S,
\qquad
\frac{\partial L}{\partial \theta}
=
-\frac{1}{T}\sum_{t=1}^T r_t(\theta,a)
\frac{\partial f_\theta(\bx_t,\bz_t)}{\partial \theta}.
$$

A key issue is that the direct inclusion of $y^S$ may create a shortcut
channel whose gradient dominates the early stage of training. To see this,
consider a random initialization $(\theta_0,a_0)$ such that
$f_{\theta_0}(\bx_t,\bz_t)\approx 0$ and $a_0\approx 0$. Then
$$
r_t(\theta_0,a_0)\approx y_t,
$$
so
\begin{equation}
\label{eq:shortcut-grad-a}
\frac{\partial L}{\partial a}(\theta_0,a_0)
=
-\frac{1}{T}\sum_{t=1}^T y_ty_t^S.
\end{equation}
If $y_t$ and $y_t^S$ are centered, the right-hand side is approximately
$-\Cov(y,y^S)$, which can be large whenever the surrogate is strongly
predictive of the target.

By contrast, for a randomly initialized network, the coordinate-wise quantities
$\partial f_{\theta_0}(\bx_t,\bz_t)/\partial \theta_j$ typically have weak
alignment with $y_t$. Hence each individual network gradient component
\begin{equation}
\label{eq:shortcut-grad-theta}
\frac{\partial L}{\partial \theta_j}(\theta_0,a_0)
=
-\frac{1}{T}\sum_{t=1}^T
y_t\frac{\partial f_{\theta_0}(\bx_t,\bz_t)}{\partial \theta_j}
\end{equation}
is often much smaller in magnitude than the shortcut gradient in
\eqref{eq:shortcut-grad-a}. This creates an \emph{initial gradient imbalance}:
the optimization dynamics can reduce the loss much faster by fitting the
one-dimensional surrogate channel than by coordinating the many parameters of
the deep network.

Consequently, in the early stage of gradient descent, the update in $a$ tends
to be much larger than the update of a typical coordinate of $\theta$. The
optimization path is therefore biased toward first fitting the easy linear
dependence of $y$ on $y^S$, whose population least-squares coefficient is
$$
a^\star
=
\frac{\Cov(y,y^S)}{\Var(y^S)}
$$
when the variables are centered. After only a few iterations, the predictor may
already be well approximated by
$$
\widehat y \approx a^\star y^S,
$$
so that a substantial fraction of the explainable variation in $y$ has been
absorbed by the surrogate channel.

At that point, the residuals
$r_t(\theta,a)\!=\!y_t-f_\theta(\bx_t,\bz_t)-ay_t^S$ can become much smaller in
magnitude, and the remaining signal available to train the deep network may
have a substantially lower signal-to-noise ratio. As a result, the gradient
signal for learning the richer structural component $F(\bx,\bz)$ becomes much
weaker:
$$
\frac{\partial L}{\partial \theta_j}(\theta,a)
=
-\frac{1}{T}\sum_{t=1}^T
r_t(\theta,a)
\frac{\partial f_\theta(\bx_t,\bz_t)}{\partial \theta_j}.
$$
Thus, although the hypothesis class of $f_\theta$ is highly expressive, the
optimization trajectory may become trapped near a shallow,
surrogate-dominated solution of the form
$$
\widehat y \approx a^\star y^S.
$$
This is the shortcut dominance phenomenon: optimization preferentially exploits
the easiest predictive channel, preventing the network from fully learning the
structural signal encoded in $F(\bx,\bz)$.

We now contrast this with \emph{residual modeling}. Define the surrogate
residual
$$
\epsilon^S := y^S-G(\bx),
$$
and consider the residual-augmented target model
$$
\widehat y
=
f_\theta(\bx,\bz)+b\epsilon^S.
$$
At initialization, by the same reasoning as in
\eqref{eq:shortcut-grad-a},
$$
\frac{\partial L}{\partial b}(\theta_0,b_0)
\approx
-\frac{1}{T}\sum_{t=1}^T y_t\epsilon_t^S.
$$
If the dominant predictive content of $y^S$ comes from the common
$\bx$-driven component $G(\bx)$, then
$$
\Big|\Cov(y,\epsilon^S)\Big|
\ll
\Big|\Cov(y,y^S)\Big|,
$$
again under centering. In this case, the residualized surrogate no longer
produces a large shortcut gradient at initialization. Gradient descent is then
less attracted to the shortcut channel and must reduce the loss primarily by
improving the structural predictor $f_\theta(\bx,\bz)$.

Accordingly, the residual term $b\epsilon^S$ acts only as a secondary,
variance-reducing correction, rather than as a dominant explanatory variable.
Residualization therefore mitigates shortcut dominance, improves the
recoverability of the structural target function $F(\bx,\bz)$, and makes the
subsequent homogeneity-pursuit step less sensitive to surrogate-driven
artifacts. In this sense, residual modeling helps ensure that grouping is more
closely aligned with the structural macro--text response, rather than with the
easy predictive shortcut carried by the raw surrogate output.

\end{CJK}
\end{document}